\documentclass[[twocolumn,
preprintnumbers,superscriptaddress
,footinbib
 aps,
 prresearch,
 reprint,
 superscriptaddress,
 amsmath,
 amssymb
]{revtex4-2}

\usepackage{graphicx}
\usepackage{dcolumn}
\usepackage{bm}
\usepackage{amsmath, amsfonts, amsthm, amssymb, type1cm, braket}
\usepackage{url}
\usepackage{appendix}
\usepackage{tikz}
\usepackage{caption}
\usepackage{mathtools}
\usepackage{subcaption} 

\newcommand{\Tr}{\text{Tr}}

\newcommand{\Prob}{\text{Prob}}
\newcommand{\phys}{\text{phys}}

\newcommand{\tot}{\text{tot}}
\newcommand{\est}{\text{est}}
\newcommand{\CNOT}{\text{CNOT}}
\newcommand{\shot}{\text{shot}}

\newcommand{\chain}{\text{chain}}
\newcommand{\Var}{\text{Var}}
\newcommand{\Cov}{\text{Cov}}

\newcommand{\Gibbs}{\text{Gibbs}}
\newcommand{\Exp}{\mathbb{E}}
\newcommand{\Obs}{\mathcal{O}}

\newcommand{\Hilbert}{\mathcal{H}}
\newcommand{\Pauli}{\mathcal{P}}
\newcommand{\Clifford}{\mathcal{C}}
\newcommand{\Stab}{\mathcal{S}}
\newcommand{\Unitary}{\mathcal{U}}

\newcommand{\Complex}{\mathbb{C}}

\newcommand{\beq}{\begin{eqnarray}}
\newcommand{\eeq}{\end{eqnarray}}
\newcommand{\e}{\mathrm{e}}

\newtheorem{theorem}{Theorem}[section]
\newtheorem{lemma}[theorem]{Lemma}

\newtheorem{corollary}[theorem]{Corollary}
\theoremstyle{definition}
\newtheorem{definition}[theorem]{Definition}

\usepackage{hyperref}

\begin{document}

\preprint{YITP-26-25}

\title{Gauge-invariant QMETTS with mutually unbiased physical bases for $\mathbb{Z}_2$ lattice gauge theories at finite temperature and density}

\author{Reita Maeno}
\email{maeno-reita916@g.ecc.u-tokyo.ac.jp}
\affiliation{Department of Physics, Graduate School of Science, The University of Tokyo, Tokyo 113-0033, Japan}
\affiliation{Yukawa Institute for Theoretical Physics, Kyoto University, Kitashirakawa Oiwakecho, Sakyo-ku, Kyoto 606-8502 Japan}

\date{\today}

\begin{abstract}
In quantum computations of gauge theories at finite temperature and finite density, enforcing Gauss’s law for all states contributing to the thermal ensemble is a nontrivial challenge.
In this work, we adopt the Quantum Minimally Entangled Typical Thermal States (QMETTS) algorithm for $\mathbb{Z}_2$ gauge-constrained systems and propose a method for computing finite-temperature and finite-density expectation values without eliminating gauge degrees of freedom.
To preserve gauge invariance while maintaining efficient sampling, we introduce measurement bases that are gauge invariant and mutually unbiased within the physical subspace.
We show that such measurement bases can be constructed efficiently for $\mathbb{Z}_2$ lattice gauge theories in general dimensions and arbitrary boundary conditions by exploiting the correspondence between $\mathbb{Z}_2$ lattice gauge theories and the stabilizer formalism.
Furthermore, since expectation-value estimation on quantum hardware is inherently affected by shot noise, we explicitly incorporate shot noise into the analysis. 
We find that the single-shot strategy is near optimal under a fixed total number of circuit executions in terms of the variance.
This result indicates that it is generally more efficient to generate more QMETTS samples than to accurately estimate the expectation value for each individual pure state.
We validate the proposed method numerically in a $(1+1)$-dimensional $\mathbb{Z}_2$ lattice gauge theory coupled to staggered fermions.
Our results provide a gauge-invariant framework for finite-temperature and finite-density calculations on quantum devices.
\end{abstract}

\maketitle


\section{Introduction}
Understanding the properties of equilibrium systems at finite temperature and finite density in gauge theories is a central issue in modern theoretical physics~\cite{Stephanov:2006zvm, Fukushima:2010bq, shuryak1980quantum}.
Ab-initio numerical studies of gauge theories are among the most powerful tools for investigating such phenomena, and lattice gauge theory provides a natural framework for this purpose through lattice regularization~\cite{Wilson:1974sk, Kogut:1974ag}.
Conventional Monte Carlo studies of lattice gauge theories based on the Euclidean Lagrangian formalism have achieved remarkable success in reproducing a wide range of nonperturbative phenomena~\cite{Carlson:2014vla, durr2008ab}.
However, at finite density or in real-time dynamics, these approaches suffer from the notorious sign problem, which severely limits their applicability~\cite{Nagata:2021ugx}.

Recently, quantum computing and quantum simulation have attracted considerable attention as promising routes to overcome these limitations.
Although a variety of quantum algorithms for finite-temperature many-body systems have been proposed, preparing thermal states on quantum devices remains a challenging task~\cite{Chen:2023cuc, Chen:2023zpu, Zhu:2019bri, su2021variational, Xie:2022jgj, Consiglio:2023mev, Sewell:2022nwi, Powers:2021nqh, Coopmans:2022mjs, Mizukami:2022ibm}.
One attractive approach is the Quantum Minimally Entangled Typical Thermal States (QMETTS) algorithm~\cite{Motta:2019yya}.
QMETTS estimates finite-temperature and finite-density observables by sampling an ensemble of pure states through a Markov chain, avoiding directly preparing a full thermal mixed state, which can require substantial quantum resources.

In gauge theories, however, one must additionally respect gauge symmetries.
Physical thermal states are supported on the physical Hilbert space, namely the subspace spanned by states satisfying the gauge constraints.
This constraint makes the design of quantum algorithms for finite-temperature gauge theories nontrivial.
One common strategy is to explicitly solve the gauge constraints and work directly within the physical Hilbert space~\cite{Chen:2024oao,Tomiya:2022chr}.
While this removes unphysical degrees of freedom, it can make Hamiltonian complicated.
In addition, retaining the redundant gauge-field degrees of freedom can itself be advantageous, since gauge constraints provide useful diagnostics and can be exploited for error mitigation~\cite{Schmale:2024ebh, Ballini:2024qmr, Rajput:2021trn, Carena:2024dzu}.
Therefore, it is important to develop methods that evaluate finite-temperature expectation values without explicitly eliminating the redundant Hilbert-space structure~\cite{Davoudi:2022uzo, Fromm:2023npm, Ballini:2023ljs}.

In this work, we develop a QMETTS-based framework for computing gauge-invariant expectation values while retaining the redundant Hilbert-space structure.
QMETTS mainly consists of two steps: imaginary-time evolution and projective measurements.
The imaginary-time evolution step preserves gauge symmetry as long as the Hamiltonian commutes with the Gauss-law operators.
The projective-measurement step, on the other hand, is more subtle.
In conventional many-body systems, efficient QMETTS sampling is often achieved by alternating mutually unbiased bases, such as the $Z$- and $X$-bases (see Eq.~(\ref{eq:MUB}) for the definition of mutually unbiasedness), whereas in gauge theories these standard bases generally violate the Gauss-law constraints.
Therefore, realizing QMETTS for gauge theories requires measurement bases that satisfy two nontrivial requirements simultaneously: they must preserve gauge symmetry and retain the mutual-unbiasedness structure needed for efficient sampling.

To address this issue, we introduce mutually unbiased physical bases (MUPB) for $\mathbb{Z}_2$ gauge theories, which satisfy the above two conditions.
By exploiting the correspondence between Gauss-law constraints and the stabilizer formalism, which is well developed in the context of quantum error correction~\cite{nielsen2010quantum, Gottesman:1997zz}, we present a general construction of such bases in arbitrary dimensions and with arbitrary boundary conditions.
The resulting circuits require $\mathcal{O}(N_q^2)$ gates and have depth $\mathcal{O}(\log N_q)$, where $N_q$ is the number of qubits.
Although related works have been studied in the tensor-network literature, mainly for condensed-matter systems with global symmetries~\cite{Binder:2017bxd, goto2019quasiexact}, extending this idea to local gauge constraints is nontrivial, both in terms of scalability and circuit efficiency on quantum devices.
Our construction is tailored to the quantum-computing setting and provides explicit quantum circuits for gauge-preserving mutually unbiased measurements.
A more detailed comparison with these tensor-network approaches is given in Sec.~\ref{subsec:contribution}.

In addition to constructing MUPBs, we analyze the effect of shot noise on expectation-value estimation in QMETTS.
Although shot noise is unavoidable on quantum hardware, its role in QMETTS has not been fully clarified.
By formulating a finite-shot version of QMETTS, we derive the optimal number of measurements per sample  under a fixed total number of circuit executions.
While this optimal shot number is generally difficult to know a priori, we show that the single-shot strategy is near-optimal: its variance is at most twice the optimum.

As a numerical demonstration, we apply our algorithm to the $(1+1)$-dimensional $\mathbb{Z}_2$ lattice gauge theory coupled to staggered fermions.
We investigate the performance of the proposed method and verify gauge-invariant sampling, improved mixing, and accurate thermal expectation values.

The rest of this paper is organized as follows.
Section~\ref{sec:LGTsetup} introduces the $(1+1)$-dimensional $\mathbb{Z}_2$ lattice gauge theory as a concrete example.
Section~\ref{sec:QMETTS} briefly reviews the QMETTS algorithm.
In Sec.~\ref{sec:MUPB}, we present our approach to evaluating gauge-invariant expectation values at finite temperature and finite density within the QMETTS framework.
In Sec.~\ref{sec:shot_analysis}, we study statistical errors in the estimation of expectation values and optimize the number of shots.
Based on Secs.~\ref{sec:MUPB} and \ref{sec:shot_analysis}, Sec.~\ref{sec:numerical} presents numerical demonstrations for the $\mathbb{Z}_2$ lattice gauge theory at finite temperature and finite density.
Finally, Sec.~\ref{sec:conclusion} summarizes our results and discusses future directions.

\section{Model} \label{sec:LGTsetup}
For concreteness, we focus on (1+1)-dimensional $\mathbb{Z}_2$ lattice gauge theory (LGT) coupled to staggered fermions, which is the simplest discrete realization of the Schwinger model~\cite{Horn:1979fy}.
Despite its minimal local Hilbert space consisting of two-level systems,
this model exhibits nontrivial properties such as confinement and has been widely used
as a benchmark for quantum simulations of lattice gauge theories~\cite{Mildenberger:2022jqr, Hayata:2024smx}.
In the staggered formulation, staggered fermions at site $2n$ and $2n+1$, $\chi_{2n}$ and $\chi_{2n+1}$, represent the two components of a Dirac spinor at site $n$, $\psi_n$.
Explicitly, a Dirac fermion $\psi_n$ is represented by
\begin{align}
    \frac{\psi_n}{\sqrt{a}} =
    \begin{pmatrix}
        \chi_{2n} \\
        \chi_{2n+1}
    \end{pmatrix}, \nonumber
\end{align}
where $a$ denotes the lattice spacing.
In this formulation, staggered fermions reside on lattice sites and gauge fields live on the links between neighboring sites.
An annihilation (creation) operator of the staggered fermion $\chi_n (\chi_n^{\dagger})$  satisfies the canonical anticommutation relation $
    \{\chi_n, \chi_m^{\dagger}\} = \delta_{n,m}\label{eq:anti-commu}
$.
The Hamiltonian is given by
\begin{align}
    H &= \frac{1}{2a} \sum_{n=1}^{L_{\text{KS}}-1} \left[\chi_n^{\dagger} \sigma^{Z}_{n, n+1} \chi_{n+1} + \text{h.c.}\right]  \nonumber \\
    & + a g^2 \sum_{n=1}^{L_{\text{KS}}-1} \left[ 1 - \sigma^{X}_{n, n+1} \right] + {m} \sum_{n=1}^{L_{\text{KS}}} (-1)^n \chi_n^{\dagger} \chi_n, \label{eq:original_Ham}
\end{align}
where $g, m$ are gauge coupling, and fermion mass, respectively.
$\sigma^Z_{n,n+1}$ and $\sigma^X_{n,n+1}$ are Pauli operators in the $\mathbb{Z}_2$ link and electric field operators, respectively.
As noted above, the number of Dirac-fermion sites, $L_{D}$, is half the number of staggered-fermion sites, $L_{\text{KS}}$, i.e., $L_{D}=L_{\text{KS}}/2$.

After mapping fermionic degrees of freedom into spin operators via Jordan--Wigner transformation~\cite{Jordan:1928wi}, the qubit description of the above Hamiltonian (\ref{eq:original_Ham}) is given in the following form;
\begin{align}
    H &= -\frac{1}{4a} \sum_{n=1}^{L_{\text{KS}}-1} \left[X_{n} \sigma^{Z}_{n, n+1} X_{n+1} + Y_{n} \sigma^{Z}_{n, n+1} Y_{n+1}\right]  \nonumber \\
    & + a g^2 \sum_{n=1}^{L_{\text{KS}}-1} \left[ 1 - \sigma^{X}_{n, n+1} \right] + \frac{m}{2} \sum_{n=1}^{L_{\text{KS}}} (-1)^n Z_n, \nonumber
\end{align}
where $X_n, Y_n$, and $Z_n$ are usual Pauli matrices acting on the qubit at site $n$.
In this representation, as shown in Fig.~\ref{fig:setup}, the qubit basis is defined as follows.
On odd sites, which correspond to particle states, $\ket{0}$ ($\ket{1}$) corresponds to an occupied (empty) fermionic state,
whereas on even (antiparticle) sites, $\ket{1}$ ($\ket{0}$) represents an occupied (empty) state.
For gauge links, $\ket{+}$ ($\ket{-}$) represents right (left) directional flux.
We impose open boundary conditions, with the left boundary fixed by a virtual link in the $\ket{+}$ state.
\begin{figure}
    \centering
    \includegraphics[width=1.0\linewidth]{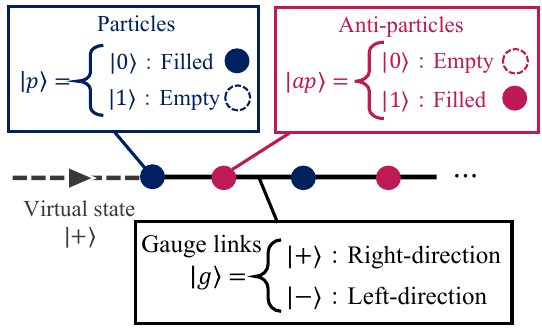}
    \caption{Schematic setup for $(1+1)$-dimensional $\mathbb{Z}_2$ LGT coupled to staggered fermions.
    Occupied sites correspond to $\ket{0}$ ($\ket{1}$) on particle (antiparticle) sites,
    while empty sites correspond to $\ket{1}$ ($\ket{0}$), respectively.
    For gauge links, $\ket{+}$ represents right directional flux and $\ket{-}$ represents left.
    We impose open boundary conditions, with the left boundary fixed by a virtual link in the $\ket{+}$ state.
    } 
    \label{fig:setup}
\end{figure}

In addition, gauge theories must respect local gauge symmetries, inducing conservation laws called Gauss's laws.
In our setup, Gauss's law operators are represented by
\begin{align}
    G_n 
    &= (-1)^{n} \sigma^X_{n-1, n} Z_{n} \sigma^X_{n, n+1}  ~~ (1\leq n \leq L_{\text{KS}}-1). \label{eq:gauss' law op}
\end{align}
Due to the boundary condition, we fix $\sigma_{0, 1}^X= 1$.
Their eigenvalues are $g_n = \pm 1$.
They commute with each other, $[G_n, G_m]=0, ~\forall n, m$ and with Hamiltonian $[H, G_n]=0, ~ \forall n$.
In this work, we choose a sector without background, called physical sector, which is realized $g_n = 1, ~ \forall n$.
Once we restrict to the physical sector, the states contributing to expectation values are only physical states $\ket{\psi}$ satisfying $G_n \ket{\psi} = +\ket{\psi}, ~ \forall n$.
We denote subspace spanned by physical states as $\Hilbert_{\phys}$ and total Hilbert space as $\Hilbert_{\tot}$.
Enforcing this constraint poses a nontrivial challenge for computation of gauge theories.

To study thermal-equilibrium properties of the theory, we consider the calculation of grand-canonical ensemble of an observable $\Obs$ under inverse temperature $\beta=1/T$ and chemical potential $\mu$:
\begin{align}
    \braket{\Obs}_{\beta, \mu} := \Tr_{\Hilbert_{\phys}} \left[ \Obs \frac{e^{-\beta(H-\mu N)}}{Z}\right], \label{eq:trace}
\end{align}
where $N$ is the number operator given by
\begin{align}
    N := \sum_{n=1}^{L_{\text{KS}}} \chi_n^\dagger \chi_n -\frac{L_{\text{KS}}}{2} I= \frac{1}{2} \sum_{n=1}^{L_{\text{KS}}}Z_n. \label{eq:number op}
\end{align}
Here we have dropped the additive constant that appears in the Jordan--Wigner
mapping of the total staggered occupation number, since it only shifts
\(H-\mu N\) by a constant and therefore does not affect normalized thermal
expectation values.
Note that the trace in Eq.~(\ref{eq:trace}) is taken over the physical subspace $\Hilbert_{\phys}$.
Since the number operator $N$ in Eq.~(\ref{eq:number op}) commutes with all Gauss's law operators $G_n$ in Eq.~(\ref{eq:gauss' law op}), the grand-canonical ensemble is compatible with gauge invariance.

For $\mathbb{Z}_2$ LGTs in a general setup, as in the $(1+1)$-dimensional case, the Gauss's law operators can be described by products of Pauli operators and mutually commute.
Even in the case of the presence of fermionic degrees of freedom, we can express them as the Pauli products by appropriate qubit transformation such as Jordan--Wigner transformation.
The physical Hilbert space is defined as the common +1 eigenspace of these operators, $G_n \ket{\psi} = +\ket{\psi}, \quad \forall n $, where $n$ is the index of vertices in general lattice setup.
Thus, all Gauss's law operators of $\mathbb{Z}_2$ LGTs in general dimensions and any boundary condition can be also represented as a set of commuting Pauli operators.

\section{QMETTS algorithm} \label{sec:QMETTS}
In this section, we review the Quantum Minimally Entangled Typical Thermal States (QMETTS) algorithm.
QMETTS is the quantum extension of the Minimally Entangled Typical Thermal States (METTS) algorithm, which was originally proposed by White \textit{et al.}~\cite{White:2010api, white2009minimally} in the context of tensor network methods.
Motta \textit{et al.}~\cite{Motta:2019yya} adapted this framework to quantum computing and introduced the QMETTS algorithm.
It provides a method for estimating expectation values of observables in finite-temperature and finite-density quantum many-body systems.
In this section, we review the standard formulation of QMETTS without gauge constraints or shot noise.
Gauge constraints will be incorporated into our modified scheme in Sec.~\ref{sec:MUPB}, and shot noise will be discussed separately in Sec.~\ref{sec:shot_analysis}.

To explain the QMETTS algorithm, we rewrite the expectation value $\braket{\Obs}_{\beta, \mu}= \Tr[\Obs e^{-\beta (H-\mu N)} / Z]$ in terms of an orthonormal basis $\{\ket n\}$
\begin{align}
    \braket{\Obs}_{\beta, \mu} 
    &= \sum_{\ket{i} \in \{\ket{n}\}} \frac{1}{Z} \bra{i}e^{-\beta (H-\mu N)/2} \Obs e^{-\beta (H-\mu N)/2}\ket{i}  \nonumber\\
    &= \sum_{\ket{i} \in \{\ket{n}\}} \Prob_i \cdot \bra{\phi_i} \Obs\ket{\phi_i}, \label{eq:decomp_thermal_states}
\end{align}
where $\ket{\phi_i} =  e^{-\beta (H-\mu N)/2}\ket{i}/\sqrt{\bra{i}e^{-\beta (H-\mu N)}\ket{i}}$ and $\Prob_i = \bra{i}e^{-\beta (H-\mu N)}\ket{i}/Z$.
Since $\ket{\phi_i}$ is normalized, it represents a pure quantum state, and
$\Prob_i$ forms a probability distribution satisfying $\Prob_i \geq 0$ and $\sum_i \Prob_i = 1$.
When the basis $\{\ket{n}\}$ is chosen to be a product basis, the states $\ket{\phi_i}$ are often expected to have relatively low entanglement.
For this reason, the states $\ket{\phi_i}$ are referred to as Minimally Entangled Typical Thermal States (METTS) following Ref.~\cite{White:2010api}.
Therefore, as described in Eq.~(\ref{eq:decomp_thermal_states}), the thermal expectation value can be interpreted as an average of
the expectation values under each METTS $\bra{\phi_i}\Obs\ket{\phi_i}$ weighted by $\Prob_i$.

The goal of the QMETTS algorithm is to efficiently sample these contributions by generating a Markov chain of basis states $\ket{i}$ whose stationary distribution is $\Prob_i$.
The procedure is as follows.
First, we sample one state $\ket{i}$ from the measurement basis $\{\ket{n}\}$ and apply the  imaginary-time evolution (ITE) operator $e^{-\beta (H-\mu N)/2}$ to $\ket{i}$, preparing the normalized imaginary-time-evolved state $\ket{i} \mapsto \ket{\phi_i} =  e^{-\beta (H-\mu N)/2}\ket{i}/\sqrt{\bra{i}e^{-\beta (H-\mu N)}\ket{i}}$ on a quantum computer.
Next, we perform a projective measurement for the METTS $\ket{\phi_i}$ with the measurement basis $\{\ket{n}\}$ and obtain a new collapse state $\ket{i'}$.
By repeating the processes of the ITE and the projective measurement, one can generate the Markov chain with the stationary distribution $\Prob_i$ because the transition probability from $\ket{i}$ to $\ket{i'}$, $T_{i \rightarrow i'}:= |\braket{i'|\phi_i}|^2$, 
satisfies the detailed balance condition:
\begin{align}
    \frac{T_{i \rightarrow i'}}{T_{i' \rightarrow i}} = \frac{{\bra{i'}e^{-\beta (H - \mu N)}\ket{i'}}}{{\bra{i}e^{-\beta (H - \mu N)}\ket{i}}}=\frac{\Prob_{i'}}{\Prob_{i}}. \nonumber
\end{align}

We generate a Markov chain of length $N_{{\chain}}$, which produces
a sequence of sampled METTS $\{\ket{\phi_{i_k}}\}_{k=1}^{N_{\mathrm{chain}}}$.
For each generated METTS $\ket{\phi_{i_k}}$ we compute the observable
estimate $\mu_{k} = \bra{\phi_{i_k}}\Obs\ket{\phi_{i_k}}$ (see the left of Fig.~\ref{fig:sampling_efficient}).
Defining the Monte Carlo average $
\bar{\mu} =
({1}/{N_{\mathrm{chain}}})
\sum_{k=1}^{N_{\mathrm{chain}}} \mu_k
$,
this estimator satisfies
\begin{align}
    \Exp[\bar \mu] 
    &= 
    \frac{1}{N_{\text{chain}}}\sum_{k=1}^{N_\chain} \Exp [\mu_k]
    = \braket{\Obs}_{\beta, \mu}. \nonumber
\end{align}
Thus, $\bar{\mu}$ is an unbiased estimator of $\braket{\Obs}_{\beta,\mu}$.
\begin{figure*}
    \centering
    \includegraphics[width=1.0\linewidth]{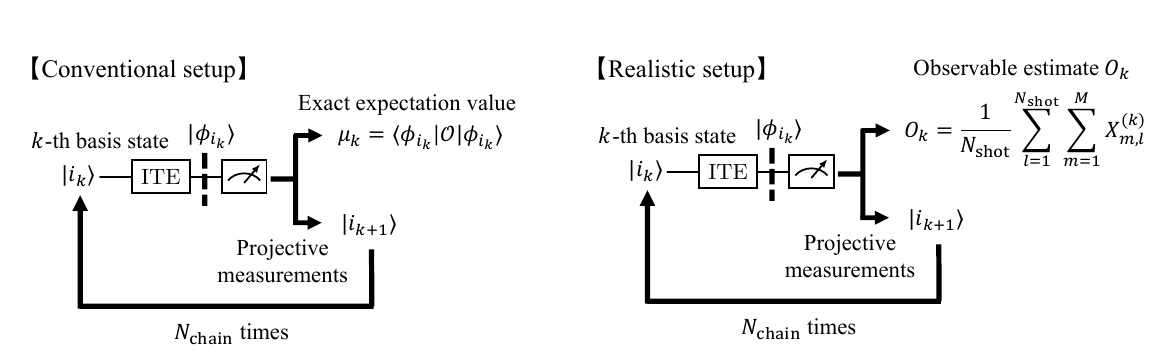}
    \caption{Scheme of the QMETTS algorithm.
        Many previous works assume an idealized setting in which shot noise is neglected and expectation values are evaluated exactly (left).
        By contrast, on quantum hardware, expectation values must be estimated from a finite number of measurements.
        We therefore consider a realistic setting (right), where the observable 
        $\Obs=\sum_{m=1}^{M}\Obs_m$ is decomposed into $M$ components, each of which can be estimated using a single measurement circuit.
        For each of the $N_{\chain}$ Markov-chain samples, we perform $N_{\shot}$ measurements for each component, resulting in $N_{\shot}M$ circuit executions for observable estimation, and one additional projective measurement to generate the next basis state.
        The corresponding total number of circuit executions is 
        $N_{\tot}=(N_{\shot}M+1)N_{\chain}$.
        In the case of alternating measurements, $\ket{i_k}$ and $\ket{i_{k+1}}$ are replaced by $\ket{i_k^{(a)}}$ and $\ket{i_{k+1}^{(b)}} $, respectively, where $(a, b)=(1, 2)$ or $(2, 1)$.}
    \label{fig:sampling_efficient}
\end{figure*}

To quantify the statistical error of the Markov chain, one must account for the autocorrelations between successive samples. 
Let $\tau_\mu$ denote the integrated autocorrelation time for the sequence of $\{\mu_k\}_{k=1}^{N_\chain}$.
For a sufficiently long Markov chain, the variance of the Monte Carlo average $\bar{\mu}$ is then given by~\cite{madras1988pivot, sokal1997monte, wolff2004monte},
\begin{equation}
\text{Var}[\bar{\mu}] = \frac{\tau_\mu}{N_{\text{chain}}} \sigma_{\mu}^2, \label{eq:metts_variance}
\end{equation}
where $\sigma_\mu^2 = \sum_i \Prob_i \cdot \mu_i^2 - (\sum_i \Prob_i \cdot \mu_i)^2$ represents the variance of $\mu_i$ with respect to the METTS distribution $\Prob_i$.
Equation~(\ref{eq:metts_variance}) shows that the Markov chain effectively yields $N_{\text{eff}} = N_{\text{chain}} / \tau_\mu$ independent samples. The integrated autocorrelation time $\tau_\mu$ is defined by
\begin{equation}
\tau_\mu = 1 + 2\sum_{t=1}^{\infty} \rho_\mu(t), \quad \rho_\mu(t) = \frac{\text{Cov}(\mu_k, \mu_{k+t})}{\sigma_{\mu}^2}, \label{eq:auto-corr}
\end{equation}
where $\rho_\mu(t)$ is the normalized autocorrelation function. 
This function quantifies the degree of correlation between samples. 
For completely independent samples, $\rho_\mu(t) = 0$ for all $t\ge1$, resulting in $\tau_\mu = 1$.

In practice, however, it is not efficient to use only a single measurement basis, because this may fail to ensure ergodicity across different symmetry sectors and may also lead to long autocorrelation times in the Markov chain, as discussed in Refs.~\cite{White:2010api, Binder:2017bxd, bruognolo2015symmetric}.
To make the algorithm efficient, one needs to use two measurement bases and alternate between them.
Considering the behavior of the imaginary time operator near the infinite-temperature limit ($\beta=0$), $e^{-\beta (H-\mu N)/2} \approx I$, the efficient mixing is achieved when the two bases $\{\ket{n^{(1)}}\}$ and $\{\ket{m^{(2)}}\}$ satisfy
\begin{align}
    |\braket{i^{(1)}|j^{(2)}}|^2 = \frac{1}{d},
    \label{eq:MUB}
\end{align}
for all $\ket{i^{(1)}} \in \{\ket{n^{(1)}}\}$ and $\ket{j^{(2)}} \in \{\ket{m^{(2)}}\}$,
where $d$ denotes the Hilbert-space dimension~\cite{Binder:2017bxd, bruognolo2015symmetric}.
The most representative bases that satisfy this relation are $X$ and $Z$~(i.e., computational) bases.
In the language of quantum information, two orthonormal bases satisfying Eq.~(\ref{eq:MUB}) are said to be mutually unbiased and are referred to as mutually unbiased bases (MUBs)~\cite{Wootters:1989lkz}.
In summary, the efficient algorithm is as follows:
\begin{enumerate}
    \item Choose one state $\ket{i^{(1)}}$ from a measurement basis $\{\ket{n^{(1)}}\}$.
    \item Generate the METTS by the ITE, $\ket{\phi_i^{(1)}}=  e^{-\beta (H-\mu N)/2}\ket{i^{(1)}}/\sqrt{\bra{i^{(1)}}e^{-\beta (H-\mu N)}\ket{i^{(1)}}}$.
    \item Compute the observable estimate of $\Obs$ under the METTS $\ket{\phi_i^{(1)}}$; $\bra{\phi_i^{(1)}}\Obs\ket{\phi_i^{(1)}}$.
    \item Measure the $\ket{\phi_i^{(1)}}$ with another measurement basis $\{\ket{m^{(2)}}\}$ 
    satisfying the condition of the MUB with $\{\ket{n^{(1)}}\}$, leading to a collapse into a basis state $\ket{j^{(2)}} \in \{\ket{m^{(2)}}\}$ with probability $|\braket{j^{(2)}|\phi_i^{(1)}}|^2$.
    \item Repeat the same step from $2$ to $4$ switching the role of $\{\ket{n^{(1)}}\}$ and $\{\ket{m^{(2)}}\}$ with $N_\chain$ times.
    \item Average the samples $\bra{\phi_i^{(a)}} \Obs\ket{\phi_i^{(a)}} ~(a=1, 2)$. 
\end{enumerate}
If all of these procedures are implemented classically, one recovers the original METTS algorithm.
Compared with tensor-network-based approaches, quantum computation can accommodate more entanglement and therefore offers greater flexibility in the choice of measurement bases.
This point is particularly important for gauge theories, where the relevant basis states inevitably contain a certain amount of entanglement, as we show in the following sections.

In the subsequent section, we extend this approach to gauge theories.
The above strategy works in generic quantum many-body systems~\cite{Chen:2024oao, Qian:2024xnr, Czajka:2021yll}.
In gauge theories, however, we need to preserve gauge invariance.
Starting from an initial state in physical space,
the imaginary-time evolved states preserve gauge invariance due to $[H, G_n]=0$.
Nevertheless, generic projective measurements do not respect the gauge constraints
and may collapse a physical state into an unphysical one.
In the next section, we propose novel measurement bases preserving gauge symmetries while satisfying Eq. (\ref{eq:MUB}) within the physical sector.

Moreover, the above review and most previous works do not take into account shot noise in the estimation of expectation values, although such noise is unavoidable in quantum computation.
In Sec.~\ref{sec:shot_analysis}, we extend the above discussion to incorporate shot noise and optimize the total number of circuit runs.

\section{Efficient construction of MUPB in $\mathbb{Z}_2$ lattice gauge theories} \label{sec:MUPB}
In this section, we develop a QMETTS-based method for estimating the expectation value $\braket{\Obs}_{\beta,\mu}$ in $\mathbb{Z}_2$ LGTs.
The central task of this section is to construct measurement bases that 
(1) preserve gauge invariance and (2) remain mutually unbiased within the physical subspace, while still being efficiently implementable.
First, we refer to such a pair of bases as the Mutually Unbiased Physical Bases (MUPB), defined as follows.
\begin{definition}[Mutually Unbiased Physical Bases] \label{def:MUPB}
    Two measurement bases $\{\ket{n^{(1)}}\}$ and $\{\ket{m^{(2)}}\}$ are called Mutually Unbiased Physical Bases if they satisfy the following properties:
    \begin{enumerate}
        \item Every state in measurement bases $\{\ket{n^{(1)}}\}$ and $\{\ket{m^{(2)}}\}$ is an eigenstate of all Gauss's law operators :
        \begin{align}
            G_n \ket{i^{(a)}} = g_n \ket{i^{(a)}}, ~g_n \in \{\pm 1\} ,~a=1, 2. \nonumber
        \end{align}
        \item Any two physical states $\ket{i^{(1)}}, \ket{j^{(2)}} \in \Hilbert_{\phys}$ satisfy the mutually unbiased condition within physical subspace $\Hilbert_{\phys}$:
        \begin{align}
            |\braket{i^{(1)}|j^{(2)}}|^2 = \frac{1}{d_{\phys}}, \nonumber
        \end{align}
        where $d_{\text{phys}}$ denotes the dimension of the physical subspace $\mathcal{H}_{\text{phys}}$.
    \end{enumerate}
\end{definition}
The first condition ensures compatibility with the gauge constraints, while the second ensures mixing within the physical subspace.
Because each basis state has definite Gauss-law eigenvalues, projective measurements in MUPB do not mix different gauge sectors.
Therefore, if the initial state is physical, the resulting Markov chain remains entirely within the physical sector and never collapses into an unphysical state.

Since measurements in quantum computation are typically performed in the computational basis, MUPB must be implemented by designing appropriate quantum circuits.
The definition of MUPB may appear complicated, and it is not obvious how to construct such bases efficiently using quantum circuits.
To address this problem, we use the stabilizer formalism, a mathematical framework originally developed in the context of quantum error correction~\cite{nielsen2010quantum, Gottesman:1997zz}.
In this formalism, logical states are characterized as simultaneous $+1$ eigenstates of a set of commuting Pauli-product operators called stabilizers $\mathcal{S} = \{S_j\}$,
\begin{align}
    S_j \ket{\psi} = + \ket{\psi} ,~ [S_j, S_k] = 0, \quad \text{for} ~\forall S_j, S_k \in \mathcal{S}. \nonumber
\end{align}
In a similar way, as mentioned in Sec.~\ref{sec:LGTsetup}, for $\mathbb{Z}_2$ LGTs in arbitrary spatial dimensions and with arbitrary boundary conditions, the gauge constraints can be viewed as commuting Pauli-product operators that define the physical subspace.
Thus, physical states are precisely the states stabilized by these gauge constraint operators.

Based on this perspective, we use the canonical form of stabilizer groups discussed in Ref.~\cite{Aaronson:2004xuh}. 
This result implies that any stabilizer group can be transformed by a Clifford unitary into a canonical form generated solely by Pauli-$Z$ operators.
Applying this theorem to the gauge constraints, for Gauss's law operators $G_n ~(n=1, \dots, S)$ in $\mathbb{Z}_2$ LGTs in arbitrary spatial dimensions and with arbitrary boundary conditions, we obtain
\begin{align}
    W G_n W^\dagger = Z_n, \qquad  n=1, \dots, S, \nonumber
\end{align}
where $W$ is a classically computable Clifford unitary, see App.~\ref{app:any_MUPB}.
Furthermore, $W$ can be implemented using $\mathcal{O}(N_q^2)$ Clifford gates with depth $\mathcal{O}(\log N_q)$, where $N_q$ denotes the number of qubits.
This transformation allows us to construct the MUPB shown in Fig.~\ref{fig:gen_MUPB}, and to prove that they satisfy the defining conditions of the MUPB.
Thus, the Clifford circuit required to realize MUPB has gate complexity $\mathcal{O}(N_q^2)$ and depth $\mathcal{O}(\log N_q)$ implying that the construction is efficient.
A proof that the bases shown in Fig.~\ref{fig:gen_MUPB} satisfy the MUPB conditions is given in App.~\ref{app:any_MUPB}.

Regarding the circuit complexity discussed here, we assume all-to-all connectivity 
and parallel application of Clifford gates. 
Whether the general MUPB construction described above can be implemented efficiently 
under nearest-neighbor connectivity remains unclear. 
Nevertheless, for the specific geometry introduced in Sec.~\ref{sec:LGTsetup}, 
we identify a specialized realization that uses only nearest-neighbor gates 
and has constant depth. 
This geometry-adapted realization is obtained heuristically using the same 
stabilizer-based perspective and satisfies the same MUPB conditions. 
It is described and validated in Sec.~\ref{sec:numerical}, and is used in the numerical 
simulations below.
\begin{figure*}
    \centering
    \includegraphics[width=0.9\linewidth]{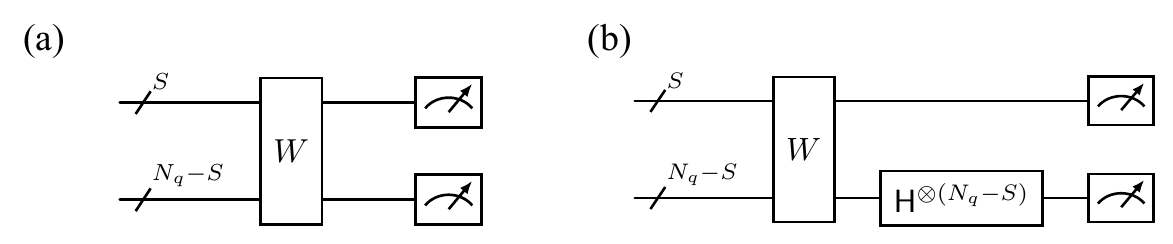}
    \caption{Quantum circuit for the MUPB.
    $W$ is a classically computable Clifford unitary that can be implemented using $\mathcal{O}(N_q^2)$ Clifford gates with depth $\mathcal{O}(\log N_q)$.
    In addition to applying $W$, Hadamard gates are applied to the $N_q - S$ qubits for the right circuit.}
    \label{fig:gen_MUPB}
\end{figure*}

Since Clifford circuits contain entangling gates such as CNOT, the resulting bases may contain highly entangled states.
This feature is particularly advantageous for QMETTS, while such entangled bases are often difficult to treat within tensor-network-based METTS approaches.
For this reason, our construction is expected to be especially useful for QMETTS, where tensor-network methods often suffer from entanglement bottlenecks.

\subsection{Contribution and relation to prior works} \label{subsec:contribution}
The calculation of thermal equilibrium properties while preserving symmetries
has been extensively studied within the framework of the
METTS algorithm,
particularly for systems with one or a few global conserved quantities by exploiting symmetry-adapted measurement bases in the context of condensed matter.

In Ref.~\cite{Binder:2017bxd}, efficient mixing bases are achieved within the METTS framework
by employing Fourier-transformed bases and Haar-random bases
that respect global symmetries.
While these constructions lead to good ergodic properties and short autocorrelations,
they typically require high complexity.
In contrast, our MUPB approach achieves efficient mixing using only shallow Clifford gates. 

Another related approach was proposed in Ref.~\cite{goto2019quasiexact},
where nontrivial measurement bases are generated via time evolution
under commuting operators with generators corresponding to symmetries within the METTS framework.
This method does not generally guarantee efficient mixing of the Markov chain.
Moreover, it remains unclear how this construction can be extended to accommodate local gauge constraints, which impose much more stringent conditions than global symmetries. 
In contrast, our MUPB framework explicitly ensures both gauge invariance and efficient mixing through a controlled, shallow-depth circuit architecture.

\section{Shot-noise analysis} \label{sec:shot_analysis}
Unlike tensor-network methods, quantum computation inherently involves shot noise when estimating expectation values of observables.
In this sense, the discussion in Sec.~\ref{sec:QMETTS} corresponds to an idealized setting.
Nevertheless, many previous analyses of the QMETTS algorithm assume exact expectation-value estimation, in close analogy with the ideal METTS framework.
In this section, we extend the QMETTS analysis to include shot noise and investigate efficient strategies for observable estimation in realistic quantum-computing settings.
Here, we address the following two questions: in the presence of shot noise, does the QMETTS estimator reproduce finite-temperature and finite-density expectation values, and what is the optimal number of shots for estimating observables under a fixed total circuit-execution budget, including both observable measurements and the projective measurements used to generate collapse states?
Note that the analysis presented in this section is not limited to gauge systems and can extend to the alternating-measurement cases easily.

To address the questions, we consider the following setup.
An observable $\Obs$ is decomposed as
$
    \Obs = \sum_{m=1}^M \Obs_m ,
$
where each $\Obs_m$ denotes a group of Pauli terms that are measured in the same measurement setting.
Here, possible coefficients of Pauli strings are absorbed into the definition of $\Obs_m$.
For a given METTS sample $\ket{\phi_{i_k}}$, we estimate each $\Obs_m$ using $N_{\shot}$ independent measurements on independent copies of the same state.
We denote by $X_{m,l}^{(k)}$ the $l$-th single-shot measurement outcome for $\Obs_m$ under the $k$-th METTS sample, where $l=1,\dots,N_{\shot}$.
See the right illustration in Fig.~\ref{fig:sampling_efficient}.
Thus, conditioned on the METTS sample $\ket{\phi_{i_k}}$, the measurement outcomes for distinct pairs $(m,l)$ are statistically independent.
The estimator for the observable under the $k$-th METTS sample is then given by
\begin{align}
    O_k
    =
    \frac{1}{N_{\shot}}
    \sum_{l=1}^{N_{\shot}}
    \sum_{m=1}^M
    X_{m,l}^{(k)} . \nonumber
\end{align}
Repeating this procedure along the METTS Markov chain yields a sequence of noisy observations $\{O_k\}_{k=1}^{N_{\chain}}$.
The corresponding Markov-chain estimator is defined as
\begin{align}
    \bar O
    =
    \frac{1}{N_{\chain}}
    \sum_{k=1}^{N_{\chain}} O_k . \nonumber
\end{align}
The number of circuit executions used for observable estimation is
$
    N_{\est} = N_{\shot}  N_{\chain} M.
$
Since one additional projective measurement is required to generate each subsequent METTS sample, the total number of circuit executions is
\begin{align}
    N_{\tot}
    =
    N_{\est}+N_{\chain}
    =
    (N_{\shot}M+1)N_{\chain}.
    \label{eq:N_tot}
\end{align}
Throughout this analysis, we assume that the Markov chain is initialized from the stationary distribution.
We also neglect gate noise, state-preparation errors, and ITE errors.

Now, we are in a position to answer the first question.
The sequence $\{O_k\}_{k}$ contains two sources of statistical fluctuation: the METTS-sampling fluctuations and the finite-shot measurement noise.
To separate these two sources, we first average over the shot noise while keeping the METTS samples fixed.
Conditioned on the $k$-th METTS sample $\ket{\phi_{i_k}}$, the expectation value of the single-shot outcome $X_{m,l}^{(k)}$ is given by
\begin{align}
    \Exp_{\shot}\!\left[X_{m,l}^{(k)} \mid \phi_{i_k}\right]
    =
    \bra{\phi_{i_k}}\Obs_m\ket{\phi_{i_k}} \nonumber
\end{align}
where $\Exp_{\shot}[\cdot | \phi_{i_k}]$ denotes the expectation over the measurement outcomes conditioned on the fixed METTS sample $\ket{\phi_{i_k}}$.
Therefore,
\begin{align}
    \Exp_{\shot}\!\left[O_k \mid \phi_{i_k}\right]
    &=
    \sum_{m=1}^M \bra{\phi_{i_k}}\Obs_m\ket{\phi_{i_k}}
    =
    \bra{\phi_{i_k}}\Obs\ket{\phi_{i_k}}. \nonumber
\end{align}
Using the law of total expectation, we then obtain
\begin{align}
    \Exp[\bar O]
    &=
    \Exp_{\chain}
    \left[
        \Exp_{\shot}
        \left[
            \bar O
            \mid
            \{\phi_{i_k}\}_{k=1}^{N_{\chain}}
        \right]
    \right] \nonumber \\
    &=
    \sum_i \Prob_i \bra{\phi_i}\Obs\ket{\phi_i}
    =
    \braket{\Obs}_{\beta, \mu}, \nonumber
\end{align}
where $\Exp_{\chain}$ denotes the expectation over the METTS Markov chain.
Thus, assuming that the Markov chain is sampled from the stationary distribution, the estimator remains unbiased for any $N_{\shot}$ and $N_{\chain}$.

We next evaluate the variance of the Markov-chain estimator $\bar O$.
As in the expectation-value calculation, we first consider the shot-noise variance for fixed METTS samples.
For a fixed METTS sample $\ket{\phi_{i_k}}$, we define the conditional shot-noise variance $\sigma_{\shot|\phi_{i_k}}^2$ by
\begin{align}
    \sigma_{\shot|\phi_{i_k}}^2 := \sum_{m=1}^M \left[\bra{\phi_{i_k}} \Obs_m^2 \ket{\phi_{i_k}} - (\bra{\phi_{i_k}} \Obs_m \ket{\phi_{i_k}})^2 \right]. \nonumber
\end{align}
Then, the shot noise variance of $O_k$ conditioned on METTS $\ket{\phi_{i_k}}$ is given by
\begin{align}
    \Var_\shot[O_k |\phi_{i_k}] = \frac{1}{N_\shot^2} \sum_{l=1}^{N_\shot} \sigma_{\shot|\phi_{i_k}}^2 = \frac{\sigma_{\shot|\phi_{i_k}}^2}{N_\shot}. \nonumber 
\end{align}
Since the conditional single-shot variance $\sigma_{\shot|\phi_{i_k}}^2$ depends on the METTS sample, we define $\sigma_\shot^2:= \sum_i \Prob_i \sigma_{\shot|\phi_i}^2$.
Using the law of the total variance, the variance of $\bar O$ with $N_\shot$ measurements is given by,
\begin{align}
    \Var[\bar O](N_\shot) 
    &= \Var_\chain\left[\Exp_\shot\left[
            \bar O
            \mid
            \{\phi_{i_k}\}_{k=1}^{N_{\chain}}
        \right]\right] \nonumber \\
    &\quad + \Exp_\chain\left[\Var_\shot\left[
            \bar O
            \mid
            \{\phi_{i_k}\}_{k=1}^{N_{\chain}}
        \right]\right] \nonumber \\
    &= \frac{\tau(N_\shot)}{N_\chain} \left( \sigma_\mu^2 + \frac{\sigma_\shot^2}{N_\shot} \right), \nonumber
\end{align}
where $\tau(N_\shot)$ is the integrated autocorrelation time of the observed sequence $\{O_k\}_{k=1}^{N_\chain}$,
\begin{align}
    \tau(N_\shot) = 1 + 2\sum_{t=1}^{\infty} \rho_{O}(t, N_\shot), \nonumber  \\
    \rho_O(t, N_\shot) = \frac{\text{Cov}(O_k, O_{k+t})}{\sigma_{\mu}^2 + (1/N_\shot) \sigma_\shot^2}. \label{eq:auto_corr_w_shot}
\end{align}
When $N_\shot$ is sufficiently large that the shot noise is negligible, $\tau (N_\shot)$ is well-approximated by the autocorrelation time of the original Markov chain $\tau_\mu$ in Eq.~(\ref{eq:auto-corr}).

This variance formula allows us to answer the second question.
We derive the optimal choice of $N_{\shot}$ that minimizes the variance of the estimator under a fixed total number of circuit executions $N_{\tot}$ ($N_\chain$ is determined by Eq.~(\ref{eq:N_tot})).
We find
\begin{align}
    N_{\shot}^* \sim \max\!\left(1,\sqrt{\frac{\sigma_\shot^2}{M\sigma_\mu^2 \tau_\mu}}\right), \label{eq:opt_shot}
\end{align}
as derived in App.~\ref{app:autocorr}.
This expression shows that fewer shots are favored when the autocorrelation is strong, when the METTS-sampling variance $\sigma_\mu^2$ is large, or when the number of measurement groups $M$ is large.
Conversely, when the uncertainty originating from shot noise $\sigma_\shot^2$ is large, it is advantageous to increase $N_{\shot}$.

In practice, however, the quantities $\sigma_\mu^2$, $\sigma_\shot^2$, and $\tau_\mu$ are generally not known in advance.
Nevertheless, we find that the variance ratio between the single-shot strategy and the optimal-shot strategy satisfies
\begin{align}
    1\le \frac{\Var[\bar O](1)}{\Var[\bar O](N_\shot^*)} < 1+ \frac{1}{M} \le 2, \label{eq:single_shot_ineq}
\end{align}
for any $N_\shot^*$ (see App.~\ref{app:autocorr}).
Thus, even without prior knowledge of the optimal parameters, the single-shot strategy provides a robust choice, with a variance at most twice the optimum.

It is also useful to compare the single-shot strategy with direct independent measurements on the Gibbs state.
Although the general expression is more involved, it takes a simple form for $M=1$ (see App.~\ref{app:autocorr} for $M\neq 1$ cases).
In this case, the total number of circuit executions is $N_{\tot}=2N_{\chain}$, and the variance of the single-shot QMETTS estimator can be written as
\begin{align}
    \Var[\bar O](1)
    =
    2\tau(1)\Var_{\Gibbs}[\bar O](N_{\tot}).
\end{align}
Here, $\Var_{\Gibbs}[\bar O](N_{\tot})$ denotes the variance of an estimator obtained from $N_{\tot}$ independent measurements on the Gibbs state.
Thus, for $M=1$, the single-shot QMETTS estimator differs from the corresponding Gibbs-state estimator by a factor of $2\tau(1)$.
In particular, if the observable sequence exhibits sufficiently strong negative correlations such that $\tau(1) < 0.5$, the QMETTS variance can be smaller than that of direct Gibbs-state measurements.
However, since the integrated autocorrelation time is generally not known before performing numerical simulations, it is difficult to design a strategy that is guaranteed in advance to outperform direct Gibbs-state measurements.
We also emphasize that the Gibbs state is a mixed state and cannot be prepared from a pure initial state by a unitary acting only on the system.

\begin{figure*}
    \centering
    \includegraphics[width=0.9\linewidth]{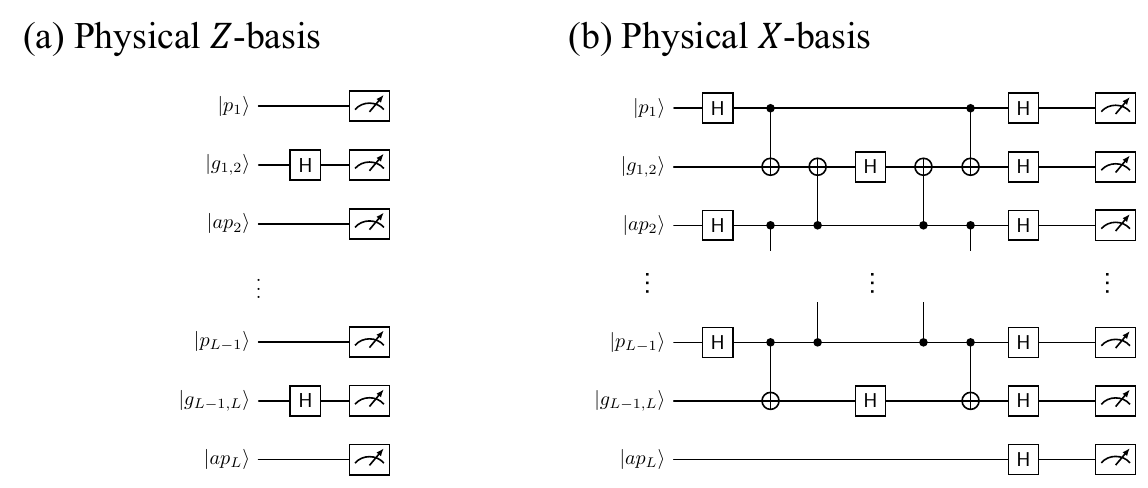}
    \caption{Quantum circuits for implementing the MUPB in the $\mathbb{Z}_2$ LGT setup introduced in Sec.~\ref{sec:LGTsetup}.
    The left and right circuits correspond to measurements in the physical $Z$ basis and the physical $X$ basis, respectively.
    Each circuit shows the measurement setup for the local degrees of freedom, including the gauge-link field $\ket{g_{n-1,n}}$ and the matter fields $\ket{p_n}$ and $\ket{ap_n}$.
    The right circuit implements the transformation to the complementary physical basis using CNOT and Hadamard gates.
    }
    \label{fig:MUPB}
\end{figure*}

\section{Numerical simulation} \label{sec:numerical}
In this section, we investigate the numerical performance of our algorithm, namely QMETTS with MUPB and the single-shot strategy, by applying it to the $\mathbb{Z}_2$ LGT setup introduced in Sec.~\ref{sec:LGTsetup}.
We first provide an explicit construction of MUPB for the setup considered here.
We then describe the simulation and measurement setup, followed by the implementation of imaginary-time evolution.
Finally, we present numerical results for the lattice size $L_{\mathrm{KS}}=4$.

\subsection{Concrete MUPB construction in the present setup}
\begin{figure*}
    \centering
    \includegraphics[width=0.9\linewidth]{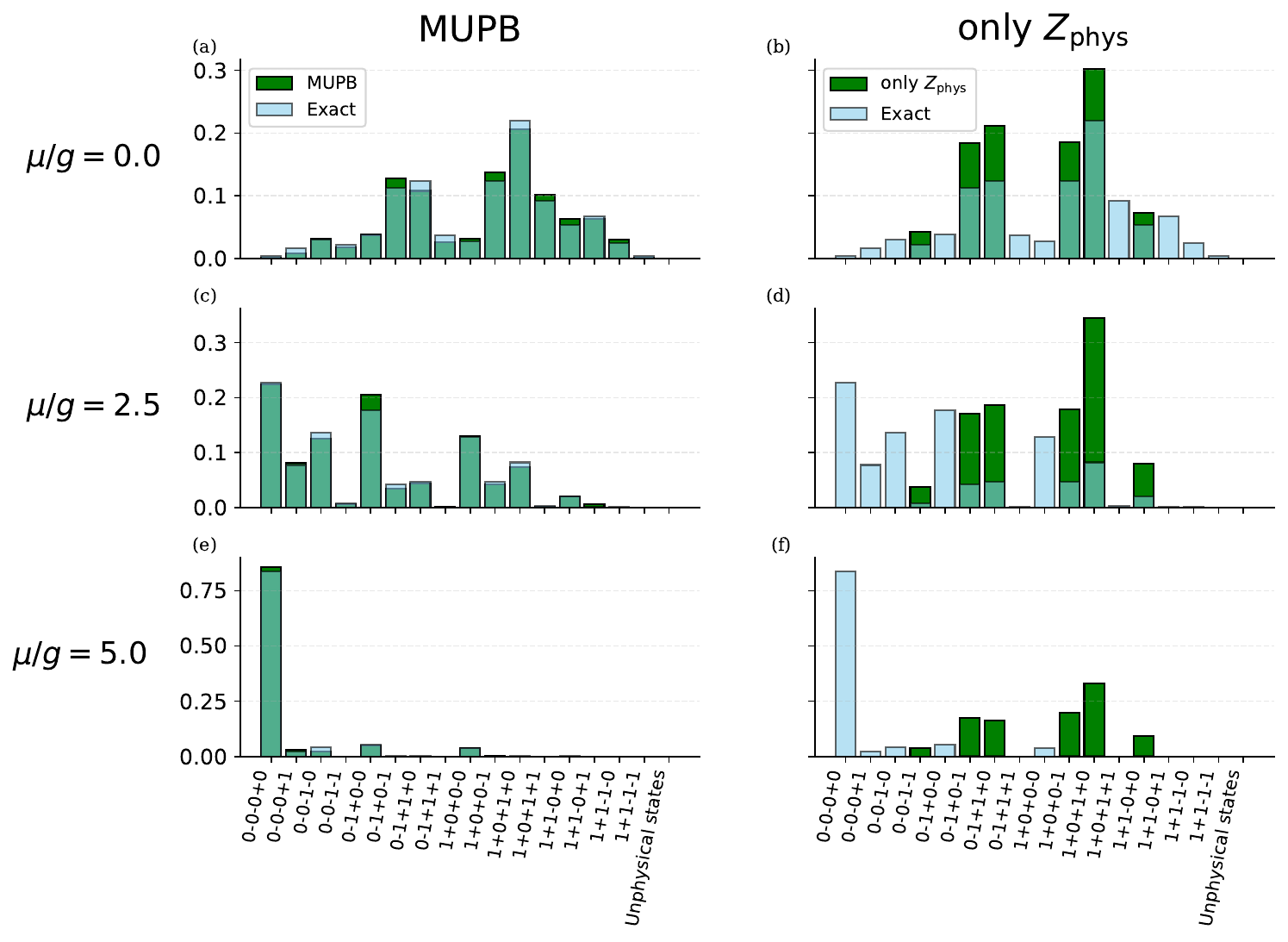}
    \caption{
    Sampling distributions of collapse states obtained from projective measurements during QMETTS sampling (green) and the exact probabilities (light blue) for $L_{\text{KS}}=4$. 
    The panels are arranged by chemical potential $\mu/g$ (rows) and by the choice of measurement basis, namely MUPB (left) or only $Z_\text{phys}$ (right).
    Inverse temperature $\beta g=1.0$. 
    All gauge-violating collapse states are grouped into ``unphysical states''.
    Top panels (a–b) correspond to $\mu/g = 0.0$, middle panels (c-d) to $\mu/g = 2.5$, and bottom panels (e-f) to $\mu/g = 5.0$.
    The exact probabilities are calculated as $\Prob_i = \langle i^{(Z_\phys)}| e^{-\beta(H-\mu N)} |i^{(Z_\phys)} \rangle / Z$. 
    All results are obtained with $N_\chain = 1000$.
    Initial state is fixed as $\ket{1+0+1+0}$ which corresponds to the trivial no-excitation state in the qubit description introduced in Sec.~\ref{sec:LGTsetup}.
    }
    \label{fig:hist_L4}
\end{figure*}
We explicitly construct a pair of MUPB for the $\mathbb{Z}_2$ LGT setup in Sec.~\ref{sec:LGTsetup}. 
Here, using the same stabilizer-based framework as in Sec.~\ref{sec:MUPB}, we identify a geometry-adapted realization that satisfies the same MUPB conditions but admits a substantially simpler nearest-neighbor, constant-depth implementation, thereby retaining efficient scaling to larger systems, shown in Fig.~\ref{fig:MUPB}.
The numerical simulations below use this specialized realization.
We refer to these bases as the physical $Z$-basis (left in Fig.~\ref{fig:MUPB}) and the physical $X$-basis (right in Fig.~\ref{fig:MUPB}), and denote their basis states by $\{\ket{e_i^{(Z_{\text{phys}})}}\}$ and $\{\ket{e_i^{(X_{\text{phys}})}}\}$, respectively.
The circuit for the physical $Z$-basis consists of Hadamard gates acting on the gauge links.
The physical $X$-basis is obtained by supplementing these Hadamard gates with a unitary operator $U$, defined by
\begin{align}
    U &:= V^{\dagger} \mathsf{H}_{L_{\text{KS}}} \cdot \left( \prod_{m=1}^{L_{\text{KS}}-1} \sigma_{m,m+1}^\mathsf{H} \right) V, \label{eq:physX_U}\\
    V &:=  \left( \prod_{m=1}^{L_{\text{KS}}-2} \CNOT_{f_{m+1}, g_{m,m+1}} \right) \nonumber \\
    &~~~~~\cdot \left( \prod_{m=1}^{L_{\text{KS}}-2} \CNOT_{f_m, g_{m,m+1}} \right)
    \cdot \left( \prod_{m=1}^{L_{\text{KS}}-1} \mathsf{H}_m \right), \label{eq:physX_V}
\end{align}
where $\CNOT_{f_p,g_{q,q+1}}$ denotes a CNOT gate with control qubit corresponding to $p$-th fermionic site $f_p$ and target qubit corresponding to $(q, q+1)$ gauge link $g_{q,q+1}$.
$\mathsf{H}_p$ and $\sigma^{\mathsf{H}}_{p,p+1}$ denote Hadamard gates acting on site $p$ and link $(p,p+1)$, respectively.
The operator $U$ plays a central role in generating mixing within the physical subspace while preserving the local gauge symmetries.
A detailed proof that the physical $Z$- and $X$-bases satisfy the MUPB conditions is given in App.~\ref{app:proof}.

\subsection{Simulation and measurement setup}
First, We now describe the simulation parameters used in our calculations.
In our simulations, all parameters are rescaled by the gauge coupling $g$, with $ag=0.25$ and $m/g=0.01$.
For the lattice size $L_{\mathrm{KS}}=4$, the required number of qubits is $N_q=7$.
All numerical simulations were performed using Qiskit.

Next, we describe the measurement setup used in the following numerical simulations.
We focus on three observables: the energy density
$\braket{h}_{\beta,\mu}:=\braket{H}_{\beta,\mu}/L_D$, the chiral condensate
$\braket{\bar{\psi}\psi}_{\beta,\mu}$, and the quark number density
$\braket{n}_{\beta,\mu}:=\braket{N}_{\beta,\mu}/L_D$.
The chiral condensate is represented in the qubit basis as
\begin{align}
    \bar{\psi}\psi
    := \frac{1}{L_D}\sum_{n=1}^{L_{\mathrm{KS}}} (-1)^n
    \chi_n^\dagger \chi_n
    = \frac{1}{L_{\mathrm{KS}}}\sum_{n=1}^{L_{\mathrm{KS}}}
    (-1)^n Z_n .
    \nonumber
\end{align}
It serves as a useful indicator of the finite-density transition in this
model.  The quark number density measures the quark number relative to the
staggered vacuum.
To measure the energy density, we decompose the Hamiltonian into three parts,
\begin{align}
H_X &= -\frac{1}{4a} \sum_{n=1}^{L_{\mathrm{KS}}-1}
X_n \sigma^Z_{n,n+1} X_{n+1}, \nonumber \\ 
H_Y &= -\frac{1}{4a} \sum_{n=1}^{L_{\mathrm{KS}}-1}
Y_n \sigma^Z_{n,n+1} Y_{n+1}, \nonumber \\
H_D &= a g^2 \sum_{n=1}^{L_{\mathrm{KS}}-1}
\left( 1 - \sigma^X_{n,n+1} \right)
+ \frac{m}{2} \sum_{n=1}^{L_{\mathrm{KS}}} (-1)^n Z_n ,
\nonumber
\end{align}
so that $H=H_X+H_Y+H_D$.  The Pauli terms within each of
$H_A$ $(A=X,Y,D)$ are qubit-wise commuting and can therefore be measured using a single measurement setting respectively.  
Thus, the energy measurement consists of
$M=3$ measurement groups.  
In contrast, the chiral condensate and the quark
number density contain only $Z$ operators and can be measured with a single measurement setting, i.e. $M=1$.

\subsection{Implementation of imaginary-time evolution}
In this subsection, we briefly explain the implementation of imaginary-time evolution (ITE), which has not been specified so far.
For ITE, we employ the Quantum Imaginary Time Evolution (QITE) method proposed by Motta \textit{et al.}~\cite{Motta:2019yya}, following previous works.
In QITE, after Trotterization, each imaginary-time evolution operator is approximated by a unitary operator obtained by solving a classical optimization problem.
Details are given in App.~\ref{app:QITE}.
In the present numerical benchmark, we use a computationally costly standard QITE implementation whose cost scales exponentially with system size, in order to minimize ITE errors and isolate the sampling performance of the proposed QMETTS framework.
However, QITE has been demonstrated on near-term quantum devices by exploiting symmetry-based reductions, circuit optimization, and error mitigation techniques~\cite{Sun:2020dzq}.
We also emphasize that our framework is, in principle, compatible with any imaginary-time evolution algorithm~\cite{Matsumoto:2024won, Kosugi:2021kgg,
Hejazi:2023fiq, Gluza:2024lqq, Getelina:2023yrf}.

\subsection{Numerical results}
\begin{figure*}[t]
    \centering
    \includegraphics[width=0.95\linewidth]{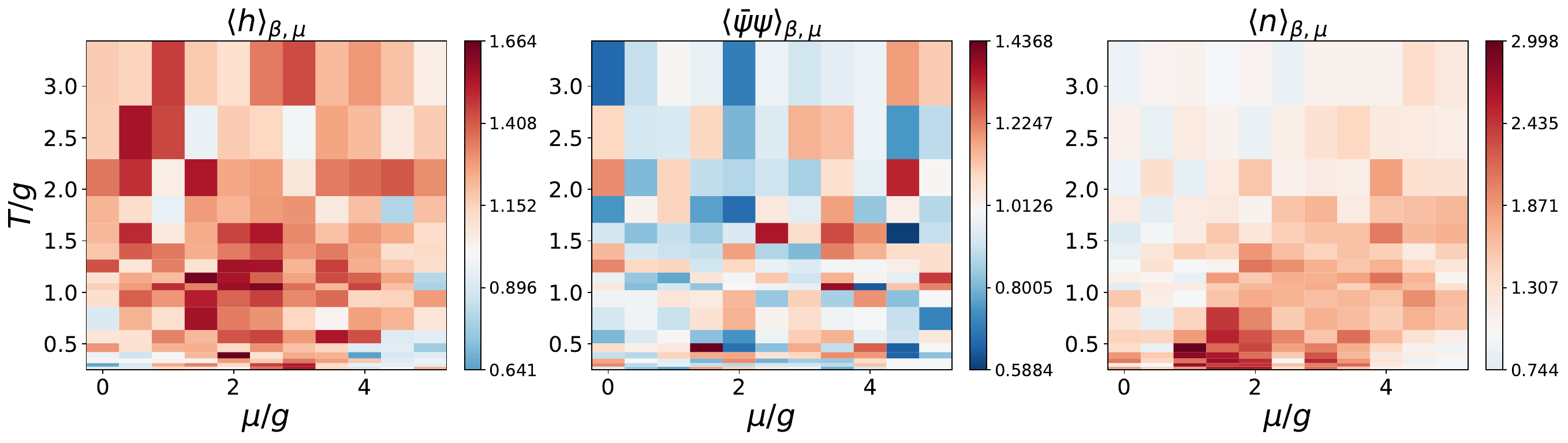}
    \caption{Integrated autocorrelation times $\tau(1)$ for the energy density $\braket{h}_{\beta,\mu}$ (left), chiral condensate $\braket{\bar \psi \psi}_{\beta,\mu}$(middle), and quark number density $\braket{n}_{\beta, \mu}$ (right), shown over the $(T/g,\mu/g)$ parameter space.
    $\tau(1)=1$ corresponds to independent sampling.}
    \label{fig:auto_correlation}
\end{figure*}
We first compare the constructed MUPB with the physical $Z$-basis approach by examining the sampling distributions of the collapse states obtained from the QMETTS algorithm.
We fix the Markov-chain length to $N_{\chain}=1000$ and choose the initial state $\ket{1+0+1+0}$ in the qubit description introduced in Sec.~\ref{sec:LGTsetup}, which corresponds to the trivial configuration without excitations.
Figure~\ref{fig:hist_L4} shows the obtained results using the MUPB (left) and using only the physical $Z$-basis (right), at three representative chemical potentials, $\mu/g=0.0$ (top), $2.5$ (middle), and $5.0$ (bottom) at $\beta g=1.0$.
The figure also shows the exact stationary distribution
$\Prob_i = \bra{i^{(Z_{\text{phys}})}} e^{-\beta (H-\mu N)} \ket{i^{(Z_{\text{phys}})}}/Z$.
In Fig.~\ref{fig:hist_L4}, all gauge-violating states are grouped into a single label, ``unphysical states''.

Both approaches restrict the sampling to the physical sector because the measurement bases are compatible with gauge invariance, which would not be possible with standard, non-gauge-invariant $Z$- and $X$-bases.
However, the approach using only the physical $Z$-basis exhibits significant deviations from the exact stationary distribution.
This indicates that using only the physical $Z$-basis does not ensure ergodicity.
This lack of mixing can be understood from the conservation of the number operator $N$.
The initial state $\ket{1+0+1+0}$ belongs to the $N=0$ sector, and the imaginary-time evolution preserves this sector because $[N,e^{-\beta(H-\mu N)}]=0$.
Moreover, measurements in the physical $Z$-basis collapse the state onto
eigenstates of $N$.
Thus, when only the physical $Z$-basis is used, the Markov chain remains trapped in the fixed-$N$ sector in the present setup.
By contrast, the MUPB-based QMETTS accurately reproduces the exact stationary distribution, consistent with the improved mixing expected from the MUPB.

In App.~\ref{app:init_depend}, we further tested several distinct initial states and found that the sampled collapse state distributions after $N_\chain=1000$ steps are all consistent with the exact stationary distribution within finite-sampling fluctuations.
This observation indicates that, in the present finite-size benchmark, the effect of burn-in is not practically significant at the chain lengths considered.
A dedicated burn-in analysis for larger lattice sizes and other models is left for future work.

\begin{figure}
    \centering
    \includegraphics[width=0.8\linewidth]{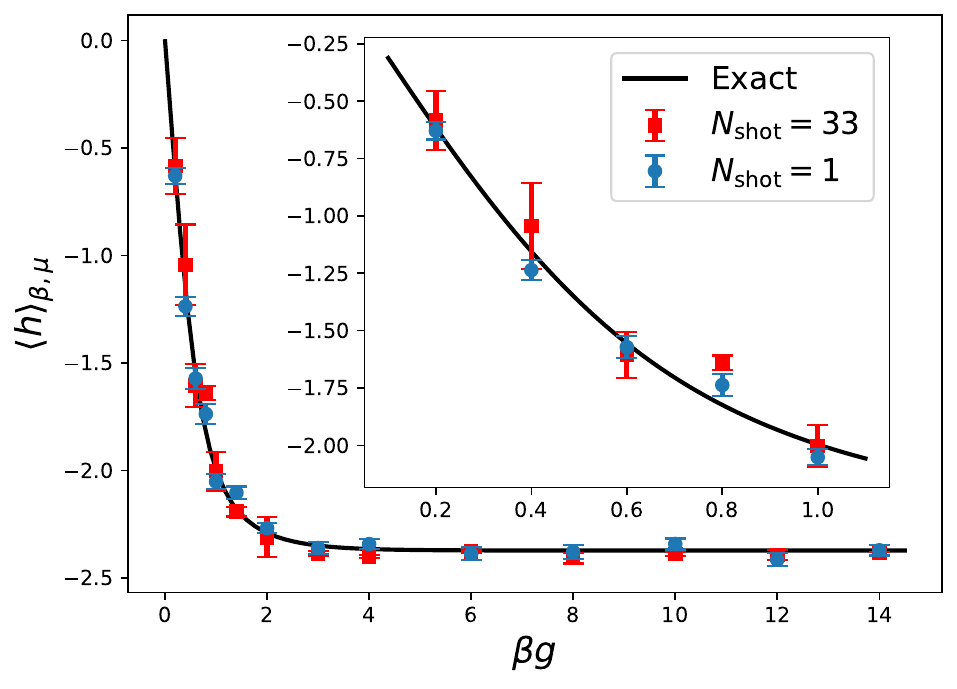}
    \caption{Temperature dependence of the energy density $\braket{h}_{\beta,\mu}$ as a function of inverse temperature $\beta g$ for $L_{\mathrm{KS}}=4$ at zero chemical potential ($\mu/g=0$).
    The QMETTS results for $N_{\shot}=1$ (blue dots with error bars) and $N_{\shot}=33$ (red dots with error bars) are compared with the exact-diagonalization result (black solid line).
    In both settings, the total number of circuit executions is fixed at $N_{\tot}=4000$, so the corresponding chain lengths are $N_{\chain}=1000$ and $40$, respectively.
    The inset highlights the high-temperature region $\beta g \in [0,1.0]$.
    }
    \label{fig:beta_H}
\end{figure}

\begin{figure*}[htbp]
    \centering
    \includegraphics[width=1.0\linewidth]{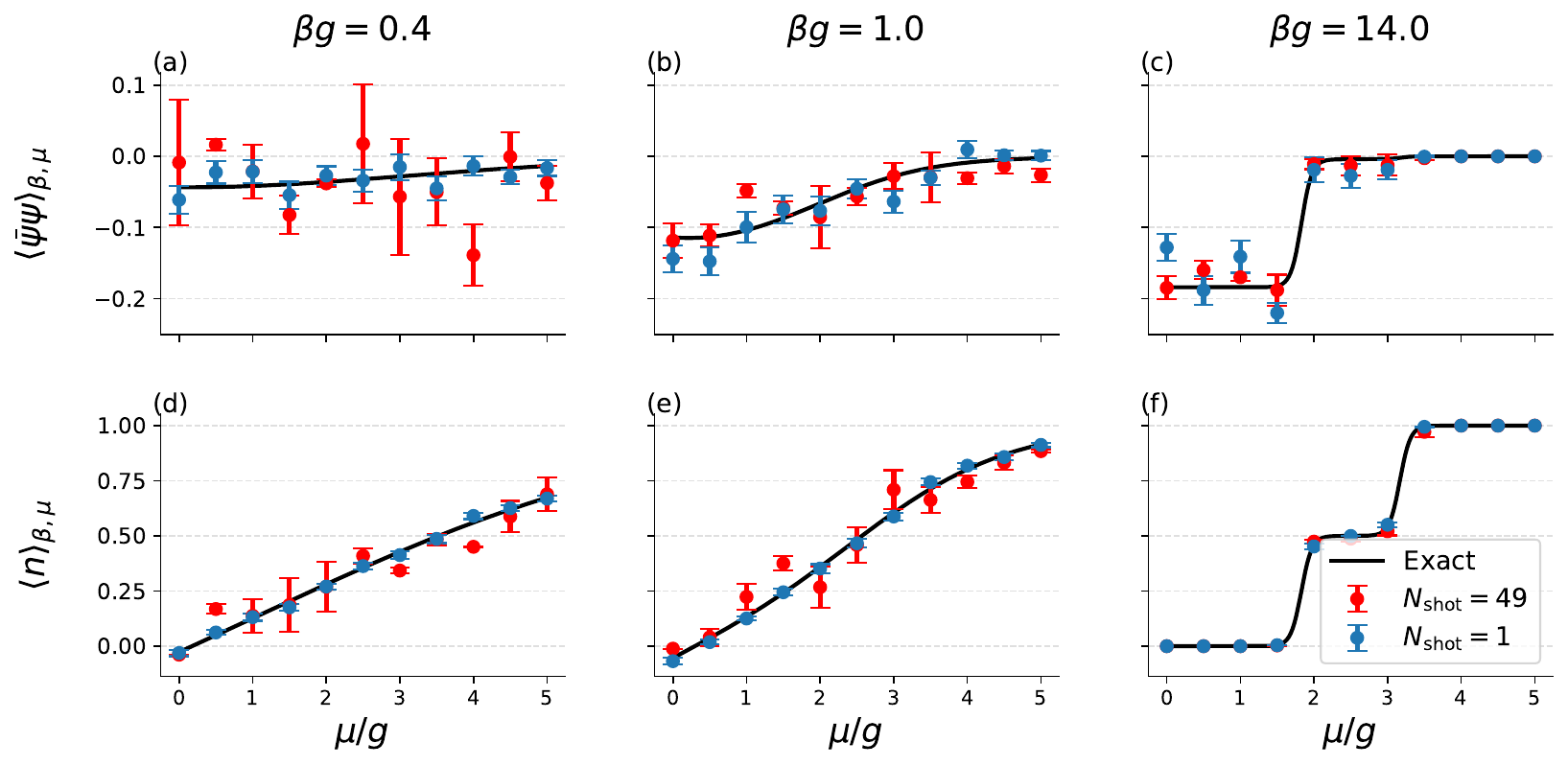}
    \caption{Dependence of chiral condensate $\langle \bar{\psi}\psi \rangle_{\beta, \mu}$ and quark number density $\langle n \rangle_{\beta, \mu}$ on the chemical potential $\mu/g$ for $L_{\text{KS}}= 4$.
    Top panels (a–c) show the chiral condensate, while bottom panels (d–f) show the quark number density. From left to right, the columns correspond to inverse temperatures $\beta g = 0.4$, $1.0$, and $14.0$, respectively.
    Both observables contain only Pauli-$Z$ operators and therefore require a single measurement setting, $M=1$.
    The QMETTS results for $N_{\shot}=1$ (blue dots with error bars) and $N_{\shot}=49$ (red dots with error bars) are compared with the exact-diagonalization result (black solid line).
    In both settings, the total number of circuit executions is fixed at $N_{\tot}=2000$, so the corresponding chain lengths are $N_{\chain}=1000$ and $40$, respectively.
    }
    \label{fig:mu_dependence}
\end{figure*}

Next, we investigate the autocorrelation of the Markov chain generated by MUPB.
For the three observables specified above, we estimate the integrated autocorrelation time $\tau(1)$.
We use $N_{\rm chain}=1000$ METTS samples for all three observables, with one shot per measurement group.
The number of measurement groups depends on the observable: $M=3$ for the Hamiltonian and $M=1$ for the chiral condensate and the quark number density.
Therefore, the corresponding total numbers of circuit executions are not identical among the observables. 
In practice, the sum of the autocorrelation functions in Eq.~(\ref{eq:auto_corr_w_shot}) is sensitive to statistical fluctuations.
To suppress these fluctuations, we introduce a window parameter $w$ and truncate the sum following the standard windowing procedure~\cite{madras1988pivot}.
With this truncation, the integrated autocorrelation time is estimated as
$
    \tau(1) =
    1 + 2\sum_{t=1}^{w} \rho_O(t,1).
$
Figure~\ref{fig:auto_correlation} shows the resulting integrated
autocorrelation times for $w=5$.
We have also checked that the estimates are stable against moderate variations of the window size.
The autocorrelation behavior is observable dependent: the chiral condensate exhibits negative autocorrelations for many parameter choices, whereas the quark number density tends to show positive autocorrelations along the METTS chain as the temperature decreases.
Despite these differences, the autocorrelation times remain moderate throughout the parameter range studied here.
These results suggest that MUPB, combined with the single-shot measurement strategy, achieves efficient mixing in the present finite-size benchmark.

Based on these observations, we examine the $\beta$ dependence of the energy density $\braket{h}_{\beta,\mu}$ shown in Fig.~\ref{fig:beta_H}.
To illustrate the practical advantage of the single-shot strategy under a fixed circuit budget, we compare $N_{\shot}=1$ and $33$ while keeping the total number of circuit executions fixed at $N_{\tot}=4000$.
The corresponding chain lengths are $N_{\chain}=1000$ for $N_{\shot}=1$ and $N_{\chain}=40$ for $N_{\shot}=33$ by Eq.~(\ref{eq:N_tot}) since $M=3$ for the Hamiltonian.
For both shot settings, the numerical results fluctuate around the exact-diagonalization result shown by the black solid line.
While the difference between the two approaches is small in the low-temperature regime, the $N_{\shot}=1$ strategy yields smaller error bars in the high-temperature regime.
At high temperature, more states contribute appreciably, which increases the METTS-sampling variance $\sigma_\mu^2$.
According to the estimate in Eq.~(\ref{eq:opt_shot}),
an increase in $\sigma_\mu^2$ drives the optimal number of shots toward smaller values.
Therefore, the behavior observed in the high-temperature regime of Fig.~\ref{fig:beta_H} is consistent with our analytical prediction that the single-shot strategy becomes favorable in this regime.

\begin{figure*}[t]
  \begin{minipage}{0.47\textwidth}
    \centering
    \includegraphics[width=1.0\textwidth]{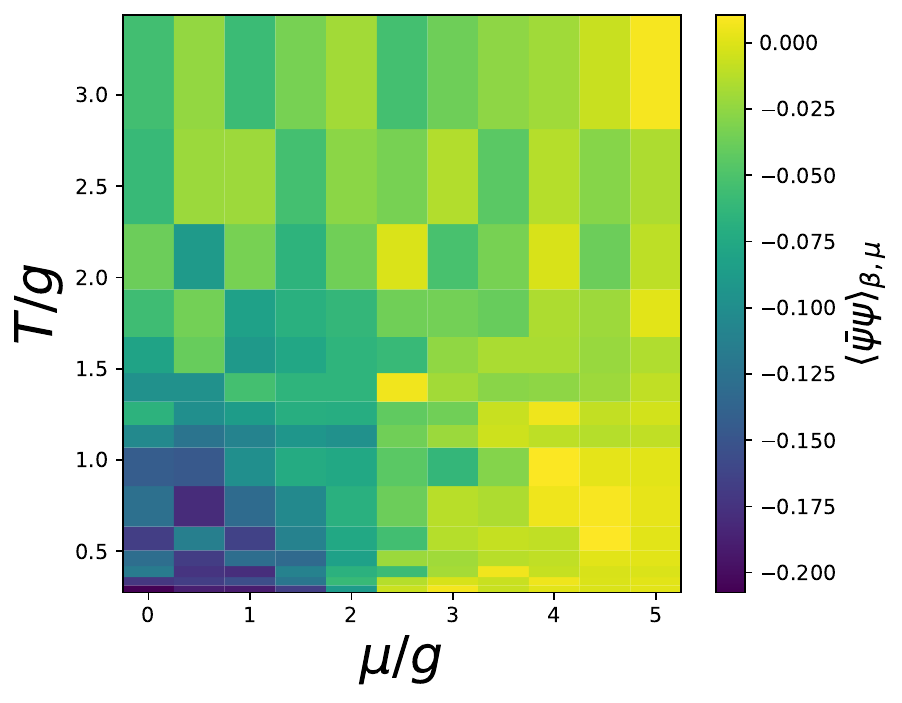}
    \label{fig:phase_diag_chiral_cond}
  \end{minipage}%
  \hfill
  \begin{minipage}{0.47\textwidth}
    \centering
    \includegraphics[width=1.0\textwidth]{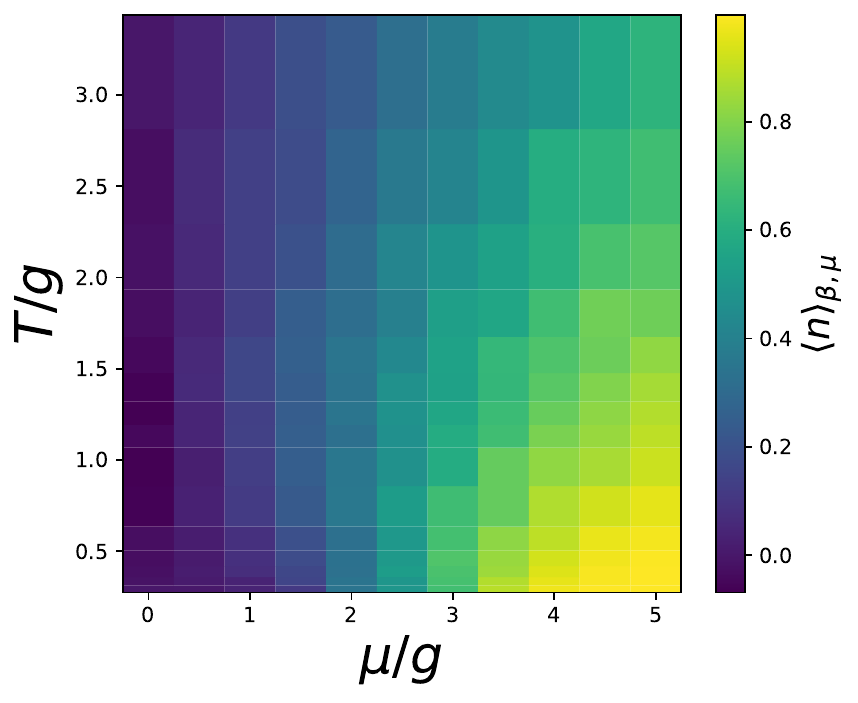}
    \label{fig:phase_diag_number_density}
  \end{minipage}
  \caption{Finite-size phase diagrams in the $(\mu/g,T/g)$ plane for the chiral condensate $\langle \bar{\psi}\psi \rangle_{\beta,\mu}$ (left) and the quark number density $\langle n \rangle_{\beta,\mu}$ (right).
  The system size is $L_{\mathrm{KS}}=4$.
  The QMETTS results are obtained with $N_{\chain}=1000$ using the single-shot strategy.
  Both observables require a single measurement setting, $M=1$.
  The color scales indicate the expectation values estimated by the proposed method. 
  The horizontal axis represents the chemical potential $\mu/g$, and the vertical axis represents the temperature $T/g$.}
  \label{fig:phase diagram}
\end{figure*}
Figure~\ref{fig:mu_dependence} shows the $\mu$ dependence of the chiral condensate $\braket{\bar{\psi}\psi}_{\beta,\mu}$ and the quark number density $\braket{n}_{\beta,\mu}$ at three different temperatures \footnote{
Since the integrated autocorrelation time can become slightly negative due to statistical fluctuations for the short-chain case with $N_{\chain}=40$, we clip negative estimates to zero in the plots $\max(\tau(N_\shot=49), 0)$.
}.
We compare the results obtained with $N_{\shot}=1$ and $N_{\shot}=49$ under the fixed total circuit budget $N_{\tot}=2000$.
The corresponding chain lengths are $N_{\chain}=1000$ for $N_{\shot}=1$ and $N_{\chain}=40$ for $N_{\shot}=49$ because $M=1$ for chiral condensate and quark number density.
Overall, the single-shot strategy remains competitive across the parameter range studied.
In particular, at higher temperatures, the $N_{\shot}=1$ results agree well
with the exact diagonalization results and perform comparably to, or better
than, the $N_{\shot}=49$ results as in the energy-density case.
On the other hand, in the small-$\mu$ region, $\mu/g \lesssim 3$, the $N_{\shot}=49$ results sometimes reproduce the exact values more accurately.
This trend is qualitatively consistent with the general shot-allocation analysis in Sec.~\ref{sec:shot_analysis}, which shows that a larger $N_{\shot}$ can become favorable when the shot noise is relatively more important than the METTS-sampling fluctuations, as the latter decrease at lower temperatures.

Finally, Fig.~\ref{fig:phase diagram} summarizes the behavior of the chiral condensate (left panel) and the quark number density (right panel) over the $(\mu/g, T/g)$ plane under the same setting as in the previous paragraph.
The chiral condensate reflects chiral-symmetry restoration as the temperature or chemical potential increases.
The quark number density further indicates that the high-density region appears in the same parameter regime where the chiral condensate is suppressed.
These results demonstrate that the proposed method captures the expected finite-temperature and finite-density structure of the model within a Hamiltonian formalism that does not rely on Euclidean Monte Carlo sampling.

\section{Conclusion} \label{sec:conclusion}
In this work, we have developed a $\mathbb{Z}_2$ gauge-invariant framework for computing thermal expectation values by introducing Mutually Unbiased Physical Bases (MUPB) in the QMETTS algorithm. 
Projective measurements within the MUPB enable us to ensure that the system never collapses into gauge-violating states while ensuring ergodicity and reducing autocorrelations.
By leveraging the mathematical equivalence between $\mathbb{Z}_2$ lattice gauge theory (LGT) and the stabilizer formalism, we have identified an efficient construction for MUPBs using shallow circuits.
This construction is applicable across arbitrary dimensions and boundary conditions.
Furthermore, we analyzed the single-shot strategy for expectation-value estimation and showed that, under a fixed total number of circuit executions, it provides a robust near-optimal choice: its variance is at most a factor of two above the optimum, while requiring no prior knowledge of the relevant variances or autocorrelation times.
To validate our methodology, we applied this approach to $(1+1)$-dimensional $\mathbb{Z}_2$ LGT, numerically demonstrating that it accurately reproduces thermal properties across a wide range of temperatures and chemical potentials in this finite-size benchmark.

Our stabilizer-based construction also suggests a straightforward extension to $\mathbb{Z}_d$, where $d$ is a prime number, lattice gauge theories since $d$-dimensional qudit systems share an analogous stabilizer structure~\cite{Gottesman:1998se, Gheorghiu:2011awf}.
On the other hand, extending this framework to continuous or non-Abelian gauge groups remains a non-trivial challenge. 
For such groups, the existence and construction of MUPB that respect the relevant gauge constraints are not yet established.
Addressing these complexities and exploring the applicability of our MUPB framework to these gauge theories will be the focus of future research.

It should be noted that the practical efficiency of the overall framework depends on the underlying imaginary-time-evolution (ITE) scheme.
In the present work, in order to benchmark the proposed sampling framework in a controlled manner, we employed a standard QITE protocol, whose cost may scale exponentially with the system size in general and it is not scalable.
However, recent studies have shown that exploiting gauge symmetry can substantially reduce the cost of the ITE component in gauge-theory simulations~\cite{Sekiyama:2026rgt}.
In addition, the integration of more scalable ITE schemes, such as Probabilistic Imaginary Time Evolution (PITE)~\cite{Kosugi:2021kgg}, Adiabatic Quantum Imaginary Time Evolution (A-QITE)~\cite{Hejazi:2023fiq}, double-bracket Quantum Imaginary Time Evolution~\cite{Gluza:2024lqq}, and LCU-based approaches~\cite{Matsumoto:2024won}, provides an important direction for future work.

Finally, as mentioned in the introduction,  our proposed framework is naturally compatible with emerging techniques that leverage gauge symmetry for symmetry verification~\cite{Schmale:2024ebh, Ballini:2024qmr, Rajput:2021trn, Carena:2024dzu}. 
In such error-mitigation schemes, quantum errors are detected by monitoring violations of gauge constraints, a process that inherently relies on the redundancy of the Hilbert space. 
Unlike standard approaches that may trace out or eliminate these degrees of freedom, our method explicitly preserves this redundancy, allowing for straightforward integration with symmetry-based error suppression.
By combining our MUPB-based QMETTS with these verification protocols, our approach provides a promising route toward more robust quantum simulations of lattice gauge theories.

\begin{acknowledgments}
We thank Etsuko Itou, Koji Terashi, and Lento Nagano for fruitful discussions and/or comments.
We thank Yutaro Iiyama for assistance with numerical calculations.
This work was supported by JST SPRING, Grant Number JPMJSP2108.
\end{acknowledgments}

\appendix

\section{Efficient construction of MUPB for $\mathbb{Z}_2$ LGTs in general dimensions and with arbitrary boundary conditions} \label{app:any_MUPB}
In this section, we provide a formal proof that the quantum circuits illustrated in Fig.~\ref{fig:gen_MUPB} satisfy the conditions for Mutually Unbiased Physical Bases (MUPB) defined through Def.~\ref{def:MUPB}.
Our proof leverages the stabilizer formalism, which is particularly well-suited for describing subspaces defined by symmetry constraints, such as gauge invariance.
First, we provide a brief review of the stabilizer formalism.
Subsequently, we apply this framework to the $\mathbb{Z}_2$ lattice gauge theory to rigorously prove our statement.

\subsection{Stabilizer formalism} \label{subsec:stab_formalism}
Let $G$ be a group.
If a subset $H \subseteq G$ can generate all elements of $G$ through the multiplication of its elements, we refer to $H$ as a generating set of $G$ and denote the group as $G = \langle H \rangle$.
The elements of $H$ are called the generators of $G$.
The $N_q$-qubit Pauli group is defined as, 
$
    \Pauli_{N_q} := \{\pm 1, \pm i\} \otimes \{I , X, Y, Z\}^{\otimes N_q}
$.
Stabilizer groups, operators, and states are defined using the Pauli group as follows.
\begin{definition}(Stabilizer groups, operators, and states)\\
    A subgroup $\Stab_{N_q} \subseteq \mathcal{P}_{N_q}$ is called a stabilizer group if it is Abelian (all elements commute) and does not contain the element $-I$.
    The elements $S \in \Stab_{N_q}$ are referred to as stabilizer operators.
    A quantum state $\ket{\psi}$ is called a stabilizer state of $\Stab_{N_q}$ if it satisfies:
    \begin{align}
        S_i \ket{\psi} = +\ket{\psi} \quad \text{for } \forall S_i \in \Stab_{N_q}. \nonumber
    \end{align}
\end{definition}

Note that all stabilizer operators are Hermitian and the eigenvalues must be $\pm 1$ since $-I \notin \Stab_{N_q}$.
The framework for describing quantum states and subspaces using these groups is known as the stabilizer formalism.
The theory of quantum error correction is mainly based on this formalism; logical qubits are embedded into the space spanned by stabilizer states called code space, and errors can be detected by measuring the eigenvalues of quantum states.

We also define the $N_q$-qubit Clifford group, which plays an important role in describing stabilizer formalism.
$N_q$-qubit Clifford group $\Clifford_{N_q}$ is defined as the normalizer of the Pauli group, 
\begin{align}
    \Clifford_{N_q} := \{C \in \Unitary(\Complex^2)^{\otimes {N_q}}| P \in \Pauli_{N_q}, CPC^{\dagger} \in  \Pauli_{N_q}\}. \nonumber
\end{align}
Clifford group consists of Hadamard gates, Phase gates, and CNOT gates.
We now summarize some key properties of the stabilizer formalism to provide our proof.
\begin{theorem}[\cite{Aaronson:2004xuh}] \label{th:code_dim}
    Let the system size be $N_q$ and the number of generators of a stabilizer group $\Stab_{N_q}$ be $S$.
    The dimension of the space spanned by the stabilizer states is given by
    \begin{align}
        2^{N_q} / 2^S = 2^{N_q - S}. \nonumber
    \end{align}
\end{theorem}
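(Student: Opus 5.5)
The plan is to prove that the projector onto the common $+1$ eigenspace has trace $2^{N_q-S}$. Throughout we assume, as is implicit in the statement, that the $S$ generators $S_1,\dots,S_S$ are \emph{independent}: no generator is a product of the others, so that $|\Stab_{N_q}|=2^S$.

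\textbf{Step 1: write down the projector.} Each $S_j$ is Hermitian with eigenvalues $\pm1$, since $-I\notin\Stab_{N_q}$. Hence $(I+S_j)/2$ is the projector onto its $+1$ eigenspace. Because the generators commute, the product
\begin{align}
P=\prod_{j=1}^{S}\frac{I+S_j}{2}
\end{align}
is the projector onto the simultaneous $+1$ eigenspace. This eigenspace is exactly the span of the stabilizer states, since a state fixed by every generator is fixed by every element of the group. The dimension we want is therefore $\Tr P$.

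\textbf{Step 2: expand and take the trace.} Expanding the product gives
\begin{align}
P=\frac{1}{2^S}\sum_{s\in\Stab_{N_q}} s,
\end{align}
because each subset of generators yields a product, and independence makes these $2^S$ products exactly the distinct elements of the group. Now take the trace. For the identity, $\Tr I=2^{N_q}$. Every other element has the form $\pm P'$ with $P'$ a non-identity Pauli string, using that elements are Hermitian with no $\pm iI$ phase. A non-identity Pauli string is traceless, since at least one tensor factor is $X$, $Y$ or $Z$. It follows that $\Tr P=2^{N_q}/2^S=2^{N_q-S}$.

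\textbf{Main obstacle.} The delicate point is justifying that the $2^S$ expansion terms are distinct and that the only one proportional to the identity is $I$ itself. Suppose a nonempty product of generators equaled $\pm I$, or two different subsets gave the same element. Then some nonempty product would equal $\pm I$ (recall that $\pm iI$ is excluded because the elements are Hermitian). The case $+I$ contradicts independence, and the case $-I$ contradicts $-I\notin\Stab_{N_q}$.

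As an alternative route, one could instead invoke the Clifford canonical form $WS_jW^\dagger=Z_j$ mentioned in the main text. Counting then becomes trivial: the first $S$ qubits are fixed to $\ket{0}$ and the remaining $N_q-S$ qubits are free. However, that argument relies on the canonical-form theorem, so I would present the trace argument as the primary proof.
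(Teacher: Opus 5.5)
Your trace argument is correct, but it takes a different route from the paper's proof.

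The paper argues by successive halving. A non-identity stabilizer is traceless with eigenvalues $\pm1$, so the $+1$ eigenspace of a single generator is exactly half the Hilbert space. The paper then asserts that each further independent commuting generator halves the space again. You instead compute the dimension in one step as $\Tr\prod_j (I+S_j)/2 = 2^{-S}\sum_{s\in\Stab}\Tr s$. Independence identifies the $2^S$ expansion terms with distinct group elements, and $-I\notin\Stab$ ensures that only the identity term has nonzero trace.

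What your route buys is rigor exactly where the paper is terse. Tracelessness of $S_{k+1}$ on the full space does not by itself show that it halves the joint $+1$ eigenspace of $S_1,\dots,S_k$. For instance, $S_{k+1}=S_1$ gives no reduction, and $S_{k+1}=-S_1$ gives an empty space. What is needed is $\Tr\bigl[S_{k+1}\prod_{j\le k}(I+S_j)/2\bigr]=0$. That is precisely your group-sum computation applied to the coset $S_{k+1}\langle S_1,\dots,S_k\rangle$.

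The paper's picture is more intuitive: one independent constraint removes one qubit's worth of dimension. Yours is the complete version of it and shows where each hypothesis enters.

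Your fallback via the canonical form would also be non-circular here. Theorem~\ref{th:decomp_stab} is proved from the binary representation without using this dimension count. It is, however, heavier machinery than the statement needs.
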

\begin{proof}
    For any non-trivial stabilizer operator $S_i \in \mathcal{S}_{N_q} \setminus \{I\}$, its eigenvalues are $\pm 1$.
    Since $\Tr[S_i] = 0$ for any Pauli string other than the identity, the multiplicities of the $+1$ and $-1$ eigenvalues must be equal.
    Consequently, the projection onto the $+1$-eigenspace of a single generator $S_i$ bisects the Hilbert space, reducing its dimension by half.
    Given $S$ independent and commuting generators, each additional generator further constrains the space by a factor of $1/2$.
    Therefore, the total dimension of the simultaneous $+1$-eigenspace is $2^{N_q-S}$.
\end{proof}

Next, we discuss the inner product of two stabilizer states $\ket{\phi}$ and $\ket{\psi}$ with different stabilizer groups.
\begin{theorem}[\cite{Aaronson:2004xuh}] \label{th:inner_prod}
    Let $\ket{\psi}$ and $\ket{\phi}$ be stabilizer states in an $N_q$-qubit system, uniquely determined (up to a global phase) by their respective stabilizer groups $\Stab(\ket{\psi}) = \langle \{P_1, \dots, P_{N_q} \}\rangle$ and $\Stab(\ket{\phi}) = \langle\{ Q_1, \dots, Q_{N_q}\} \rangle$. The magnitude of their inner product $|\braket{\phi|\psi}|$ is determined as follows,
    \begin{enumerate}
        \item If there exists an operator $P \in \mathcal{P}_{N_q}$ such that $P \in \Stab(\ket{\psi})$ and $-P \in \Stab(\ket{\phi})$, the states are orthogonal: $|\braket{\phi|\psi}| = 0$.
        \item 
        Let $S$ be the minimum, over all sets of generators $\{P_1, \dots , P_{N_q}\}$ for $\Stab(\ket{\psi}) = \langle \{P_1, \dots, P_{N_q} \}\rangle$ and $\{Q_1, \dots , Q_{N_q}\}$ for $\Stab(\ket{\phi}) = \langle\{ Q_1, \dots, Q_{N_q}\} \rangle$, of the number of different $i$ values for which $P_i \neq Q_i$. 
        Then, $|\braket{\phi|\psi}| = 2^{-S/2}$.
    \end{enumerate}
\end{theorem}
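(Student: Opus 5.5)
The plan is to reduce both cases to trace computations with the projectors
\[
\Pi_\psi = \prod_{i=1}^{N_q}\frac{I+P_i}{2}=\frac{1}{2^{N_q}}\sum_{P\in\Stab(\ket{\psi})}P,
\qquad
\Pi_\phi = \frac{1}{2^{N_q}}\sum_{Q\in\Stab(\ket{\phi})}Q .
\]
These are rank-one projectors by Theorem~\ref{th:code_dim} with $S=N_q$. Then
\[
|\braket{\phi|\psi}|^2=\Tr[\Pi_\phi\Pi_\psi]=\frac{1}{4^{N_q}}\sum_{P\in\Stab(\ket{\psi})}\sum_{Q\in\Stab(\ket{\phi})}\Tr[QP].
\]
For Pauli operators, $\Tr[QP]$ vanishes unless $Q=\pm P$, and in that case it equals $\pm 2^{N_q}$. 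Hence
\[
|\braket{\phi|\psi}|^2=2^{-N_q}\sum_{P\in \Stab(\ket{\psi})\cap\pm\Stab(\ket{\phi})} \epsilon_P ,
\]
where $\epsilon_P=\pm1$ records whether $P$ or $-P$ lies in $\Stab(\ket{\phi})$.

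For case 1 I will use a direct argument rather than this formula. If $P\ket{\psi}=\ket{\psi}$ and $-P\ket{\phi}=\ket{\phi}$, then $\braket{\phi|\psi}=\bra{\phi}P\ket{\psi}=-\braket{\phi|\psi}$, so the overlap is zero.

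For case 2, suppose no sign conflict exists. Then every Pauli that lies in both groups up to sign lies in both with sign $+$. So the intersection $K=\Stab(\ket{\psi})\cap\Stab(\ket{\phi})$ is exactly the index set of the sum, and every $\epsilon_P=+1$. This gives $|\braket{\phi|\psi}|^2=|K|/2^{N_q}$. $K$ is a subgroup of an elementary abelian $2$-group, so $|K|=2^{k}$ where $k$ is its number of independent generators, and $|\braket{\phi|\psi}|=2^{-(N_q-k)/2}$.

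It remains to identify $N_q-k$ with the quantity $S$ in the statement, which is where the combinatorics is. First, extend a basis $\{R_1,\dots,R_k\}$ of $K$ to generating sets $P_i=Q_i=R_i$ for $i\le k$, and complete each group with $N_q-k$ further generators. This yields generating sets differing in exactly $N_q-k$ positions, so $S\le N_q-k$. Conversely, any pair of generating sets with $P_i=Q_i$ for $N_q-S$ indices exhibits $N_q-S$ independent common elements. Those elements lie in $K$, so $k\ge N_q-S$. Hence $S=N_q-k$.

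The main subtlety is the sign bookkeeping. One must check that if the two groups are not in the case-1 situation, then no element of $\Stab(\ket{\psi})$ has its negative in $\Stab(\ket{\phi})$. That is exactly the negation of case 1, so the dichotomy is exhaustive. The vanishing of $\Tr$ for distinct Pauli strings, together with $-I\notin\Stab$, makes the rest routine.
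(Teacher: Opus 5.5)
Your proof is correct, and it takes a genuinely different route from the paper.

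Case 1 is handled the same way in both. For case 2, the paper invokes the Clifford canonical form (Theorem~\ref{th:decomp_stab}) to send $\ket{\psi}$ to $\ket{0}^{\otimes N_q}$. It then picks generators of $\Stab(W\ket{\phi})$ that share the maximal number $R=N_q-S$ with $\langle Z_1,\dots,Z_{N_q}\rangle$. Finally it asserts that each of the remaining $N_q-R$ generators, which must contain an $X$ or $Y$, ``contributes a factor $1/\sqrt{2}$'' to the overlap.

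That last step is heuristic. Making it rigorous needs two further facts. One is the structure of stabilizer-state amplitudes: equal magnitudes on an affine subspace of dimension $N_q-R$. The other is that no product of the non-shared generators is diagonal, i.e.\ that $R$ equals the dimension of the intersection.

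Your projector-trace identity avoids the canonical form (and the paper's forward reference to it) and any amplitude analysis. It reduces everything to $|\braket{\phi|\psi}|^2=|K|/2^{N_q}$ with $K=\Stab(\ket{\psi})\cap\Stab(\ket{\phi})$. Your basis-extension argument then proves $S=N_q-\dim K$ explicitly, where the paper leaves it implicit. As a side benefit, the same formula also gives case 1: $\epsilon$ is a homomorphism on the unsigned intersection, and a nontrivial sign character sums to zero.

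What the paper's route offers instead is a computational-basis picture. It ties into the Clifford machinery used elsewhere in the paper and mirrors the Aaronson--Gottesman procedure for actually computing the overlap. It does not give additional generality.
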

Note that the number of different generators cannot be uniquely determined and we must maximize the number of the same generators in step 2.
\begin{proof}
    First, we prove the first step.
    If there exists $P$ such that $P\ket{\psi} = \ket{\psi}$ and $-P\ket{\phi} = \ket{\phi}$,
    $\ket{\psi}$ and $\ket{\phi}$ are in different eigenspaces.
    Thus, we obtain $\braket{\phi|\psi} = 0$.

    Next, we prove the second step.
    By the theorem~\ref{th:decomp_stab}, there always exists an element of the Clifford group $W$ such that $\ket{\psi}$ transforms into canonical form $W\ket{\psi} = \ket{0}^{\otimes N_q}$, indicating that $W\ket{\psi}$ is stabilized by $\langle\{Z_1, \dots, Z_{N_q}\}\rangle$.
    Now choose generators of $\Stab(W\ket{\phi})$ so that the number of generators shared with $\Stab(W\ket{\psi})$ is maximized.
    Let $R$ be the number of such common independent generators.
    By definition, the minimum number of different generators is therefore $S=N_q-R$.
    Because $\ket{\psi}$ and $\ket{\phi}$ are assumed to be non-orthogonal, $\Stab(W\ket{\phi})$ cannot contain any operator of the form $-P$ with $P\in \Stab(W\ket{\psi})$; otherwise, $W\ket{\psi}$ and $W\ket{\phi}$ would belong to different eigenspaces of $P$ and hence be orthogonal.
    The remaining $N_q-R$ generators necessarily contain at least one $X$ or $Y$ literal.
    Each one relates pairs of computational-basis amplitudes up to a phase and hence contributes a factor $1/\sqrt{2}$ to the overlap.
    Therefore, 
    $
    |\braket{\phi|\psi}|
    =
    |\braket{W\phi|W\psi}|
    =
    |\braket{W\phi|0^{\otimes N_q}}|
    =
    2^{-S/2}$.
\end{proof}

Next,  we introduce the binary expression of the stabilizer formalism to explain the canonical form of stabilizer groups~\cite{Aaronson:2004xuh}.
Given $N_q$-qubit systems, if we have $S$ generators, stabilizers are represented by $S \times 2N_q$ matrices, where all elements are $0$ or $1$, that is $\mathbb{F}_2^{S \times 2N_q}$.
The rows correspond to each stabilizer and the columns to each qubit.
The former part of the matrices represents $X$ and the latter $Z$.
Global phases get dropped.
For example, the five-qubit code \cite{nielsen2010quantum}:
\begin{align}
    XZZXI,
    IXZZX,
    XIXZZ,
    ZXIXZ , \nonumber
\end{align}
can be equivalently expressed in the following matrix up to global phases:
\begin{align}
\left(
  \begin{array}{ccccc|ccccc}
    1 & 0 & 0 & 1 & 0 & 0 & 1 & 1 & 0 & 0 \\
    0 & 1 & 0 & 0 & 1 & 0 & 0 & 1 & 1 & 0 \\
    1 & 0 & 1 & 0 & 0 & 0 & 0 & 0 & 1 & 1 \\
    0 & 1 & 0 & 1 & 0 & 1 & 0 & 0 & 0 & 1 
  \end{array}
\right). \nonumber
\end{align}
In this expression, $XZZXI$ corresponds to the first row $(1  0  0  1  0 ~ | ~  0  1  1  0  0)$.
We denote the left (right) space as $X$- ($Z$-) space.
In this expression, Clifford gates (Hadamard, Phase, and CNOT) can be interpreted as elementary column operations.
Since $\mathsf{H}X\mathsf{H} = Z, \mathsf{H}Y\mathsf{H} = -Y$,  Hadamard gates play a role in swapping corresponding two columns.
$\mathsf{P}Z\mathsf{P}^\dagger = Z, \mathsf{P}X\mathsf{P}^\dagger = Y$ means that Phase gates $\mathsf{P}$ add the columns of $X$ to the columns of $Z$ modulo $2$.
CNOT$_{i, j}$, which controls $i$-th qubit and acts $X$ gate to $j$-th qubit, is equivalent with $x_j \rightarrow x_j ~\oplus ~x_i$ and $z_i \rightarrow z_i ~\oplus ~z_j$ since $\CNOT_{i, j} X_i \CNOT_{i, j}^\dagger = X_i X_j$ and $\CNOT_{i, j} Z_j \CNOT_{i, j}^\dagger = Z_i Z_j$, where $x_i (z_i)$ is the $i$-th column vector and $\oplus$ represents addition modulo $2$.
Since generators are independent, we can make the $X$-space block have rank $S$ by using $\mathsf{H}$ gates.
Due to the nature of $\mathbb{F}_2$, $\CNOT$ gates play a role in adding and swapping column vectors in $X$-space, therefore we can perform Gaussian elimination in $X$-space by using CNOT gates.
In addition, the commutativity condition imposes a constraint on the matrix.
If the space is represented by $(A|B)$, the commutativity condition requires $AB^T$ to be symmetric.

Based on this expression, we prove the following theorem which is the modified version of Theorem~8 in the literature~\cite{Aaronson:2004xuh}.
\begin{theorem} \label{th:decomp_stab}
    Given the binary representation of \(S\) independent commuting stabilizer generators in \(\mathbb{F}_2^{S\times 2N_q}\), we can transform them into the canonical form by the application of Clifford gates:
    \begin{align}
        \left(
          \begin{array}{cc|cc}
            0 & 0  & I & 0
          \end{array}
        \right), \nonumber
    \end{align}
    where $I \in \mathbb{F}_2^{S \times S}$.
\end{theorem}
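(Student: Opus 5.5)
We will argue directly in the binary symplectic picture, using only the dictionary just described: Hadamard swaps the $X$- and $Z$-columns of one qubit, the Phase gate adds the $X$-column of a qubit to its $Z$-column, and $\CNOT_{i,j}$ acts as $x_j\to x_j\oplus x_i$, $z_i\to z_i\oplus z_j$. We may also use invertible row operations on the $S\times 2N_q$ matrix $(A|B)$. These only change the choice of generators (products of generators, with signs tracked separately), so they leave the stabilizer group unchanged. Independence of the generators means $(A|B)$ has rank $S$. Commutativity means $AB^T+BA^T=0$ over $\mathbb{F}_2$, i.e.\ $AB^T$ is symmetric.

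\textbf{Step 1 (make the $X$-block full rank).} We claim that Hadamards on a suitable set of qubits make $A$ have rank $S$. We proceed greedily. Row-reduce so that $A$ has rank $r$, with its first $r$ rows independent and the last $S-r$ rows of $A$ zero. The last $S-r$ rows then have only $Z$-parts $B_2$, which are linearly independent. Choose $S-r$ qubit columns on which $B_2$ is independent, and apply Hadamards there. We must check that this does not destroy the rank contributed by the first $r$ rows. This is the delicate point: swapping columns could make $A$'s first rows dependent on the new columns. We handle it by first using CNOTs to bring $A$'s top block to the form $(I_r\ 0)$ on qubits $1,\dots,r$. Commutativity then gives $B_2$ zero on columns $1,\dots,r$ (since $A_1B_2^T=0$ forces the first $r$ entries of each row of $B_2$ to vanish). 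Hence the independent columns of $B_2$ can be chosen among qubits $r+1,\dots,N_q$, and Hadamards there leave the top block $(I_r\ 0)$ untouched. We expect this step to be the main obstacle, and the commutativity constraint is exactly what resolves it.

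\textbf{Step 2 (Gaussian elimination).} With $\mathrm{rank}\,A=S$, use CNOTs (column additions and swaps within the $X$-space, since a swap is three CNOTs) together with row operations to bring the matrix to $(I_S\ 0\,|\,B_1\ B_2)$.

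\textbf{Step 3 (clear the $Z$-block).} Commutativity now says $I_S B_1^T + B_1 = 0$, so $B_1$ is symmetric. The diagonal of $B_1$ is removed by Phase gates on qubits $1,\dots,S$. An off-diagonal symmetric pair $(B_1)_{ij}=(B_1)_{ji}=1$ is removed by a controlled-$Z$ on $(i,j)$, i.e.\ $\mathsf{H}_j\CNOT_{i,j}\mathsf{H}_j$, which adds $x_i$ to $z_j$ and $x_j$ to $z_i$. For $B_2$, an entry $(B_2)_{ik}=1$ with $k>S$ is removed by $\CNOT_{k,i}$, which adds $z_i$ into $z_k$. This does not alter the $X$-block, because column $k$ of the $X$-block is zero.

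\textbf{Step 4.} The matrix is now $(I_S\ 0\,|\,0\ 0)$. Applying Hadamards on qubits $1,\dots,S$ gives the canonical form $(0\ 0\,|\,I\ 0)$. Signs are harmless: Pauli $X$ conjugation can flip any remaining $-Z_i$ to $Z_i$, and the absence of $-I$ in the group guarantees consistency. The overall gate count is $\mathcal{O}(N_q^2)$.
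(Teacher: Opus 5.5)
\textbf{Gap in your Step 3.} The clearing of $B_2$ fails as written. $\CNOT_{k,i}$ (control $k>S$, target $i\le S$) does act by $z_k\to z_k\oplus z_i$, but $z_i$ is the $Z$-column of qubit $i$, i.e.\ column $i$ of $B_1$, not the unit vector $e_i$. You have just cleared $B_1$ to zero, so this gate leaves the matrix unchanged. At the Pauli level, row $i$ is $X_i\prod_{k>S}Z_k^{(B_2)_{ik}}$. A CNOT fixes $X$ on its target and $Z$ on its control, so row $i$ (and every other row) is invariant. Even before $B_1$ is cleared, adding $z_i$ would flip every entry $(r,k)$ with $(B_1)_{ri}=1$, not the single entry $(i,k)$.

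The fix is the gadget you already use for the off-diagonal part of $B_1$. $CZ_{i,k}=\mathsf{H}_k\CNOT_{i,k}\mathsf{H}_k$ acts by $z_k\to z_k\oplus x_i=z_k\oplus e_i$ and $z_i\to z_i\oplus x_k=z_i$ (since $x_k=0$). It therefore flips exactly $(B_2)_{ik}$ and changes nothing else. This is what the paper's steps 3--4 do, up to leftover Hadamards on qubits whose columns end up zero. Hadamards on qubits $S+1,\dots,N_q$ move $D=B_2$ into the $X$-block. Then $\CNOT_{i,k}$ with control $i\le S$ clears it without touching $z_i$, because the $Z$-columns of those qubits are now zero.

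\textbf{Comparison with the paper, once repaired.} With that change your argument is complete. It differs from the paper mainly in how the symmetric block $B_1$ is removed. The paper invokes Lemma~\ref{lem:sym} to write $B_1+\Lambda=MM^T$. It then applies phase gates, a CNOT layer taking $(I|MM^T)\to(M|M)$, phase gates again, and a final CNOT layer. The whole circuit is thus a sequence of $\mathsf{H}$, $\mathsf{P}$ and CNOT layers, which is the structure the paper's subsequent $\mathcal{O}(\log N_q)$-depth count relies on. You instead remove the diagonal with phases and each symmetric off-diagonal pair with a $CZ$. This needs no lemma and is more transparent, but it yields an unstructured $CZ$ network whose depth is not obviously logarithmic.

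Your Step 1 is a genuine addition. The paper only asserts that Hadamards can make $A$ full rank; you justify it. After CNOTs bring the top block to $(I_r\ 0)$, commutativity forces the $Z$-only rows to vanish on qubits $1,\dots,r$. One small correction: the Hadamards on the chosen columns can fill in the top rows there, so the top block is not literally ``untouched.'' The conclusion still holds, because the new $X$-block is block upper-triangular with invertible diagonal blocks and hence has rank $S$.
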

To prove the above theorem, we show a lemma below.
\begin{lemma}[\cite{Aaronson:2004xuh}] \label{lem:sym}
    Let $k$ be an integer $k \in \mathbb{N}$.
    For any symmetric matrix $A \in \mathbb{Z}_2^{k \times k}$, there exists a diagonal matrix $\Lambda$ such that $A + \Lambda = MM^T$, with $M$ some invertible binary matrix.
\end{lemma}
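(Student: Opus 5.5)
The plan is to prove the stronger statement that $M$ can be taken \emph{lower triangular with unit diagonal}, which makes it automatically invertible over $\mathbb{F}_2$ since its determinant is $1$. The idea is a binary analogue of the Cholesky decomposition. In a factorization $MM^T$ the off-diagonal entries can be matched exactly, while any mismatch on the diagonal is absorbed into $\Lambda$. This freedom is precisely what the lemma allows.

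Concretely, write $L$ for the sought lower unitriangular matrix, so $L_{ii}=1$ and $L_{ij}=0$ for $j>i$. For $i>j$ we have, over $\mathbb{F}_2$,
\begin{align}
(LL^T)_{ij}=\sum_{k\le j} L_{ik}L_{jk} = L_{ij} + \sum_{k<j} L_{ik}L_{jk}, \nonumber
\end{align}
where the last step uses $L_{jj}=1$. Requiring $(LL^T)_{ij}=A_{ij}$ therefore gives
\begin{align}
L_{ij} = A_{ij} + \sum_{k<j} L_{ik}L_{jk}. \nonumber
\end{align}
The right-hand side only involves entries $L_{ik}$ and $L_{jk}$ with column index $k<j$. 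So I will define $L$ column by column, $j=1,2,\dots,k$, filling each column from the entries of earlier columns. A short induction on $j$ shows this recursion is well defined and that every strictly lower entry of $LL^T$ equals the corresponding entry of $A$.

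Both $LL^T$ and $A$ are symmetric, so the strictly upper entries agree as well. Hence $A+LL^T$ is diagonal. I then set $\Lambda := \mathrm{diag}\big(A_{ii} + (LL^T)_{ii}\big)_i$, so that $A+\Lambda = LL^T$ with $M=L$ invertible.

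There is no real obstacle. The only point needing care is checking that the recursion only refers to already-determined entries, and this is guaranteed by processing the columns in increasing order of $j$. Note also that symmetry of $A$ is exactly what is needed: it lets us control only the strictly lower part and obtain the upper part for free.
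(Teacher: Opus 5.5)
Your proposal is correct and follows essentially the same route as the paper. Both take $M$ lower unitriangular, determine the strictly lower entries column by column (by induction on $j$) from $M_{ij} = A_{ij} + \sum_{k<j} M_{ik}M_{jk}$, and absorb the diagonal mismatch into $\Lambda$; you also state explicitly two points the paper leaves implicit, namely that symmetry of $A$ and $MM^T$ takes care of the upper triangle and that unit triangularity gives $\det M = 1$.
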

\begin{proof}
    Let $M$ be a lower-triangular matrix with $1$s all along the diagonal:
    \begin{align}
        M_{ii} &= 1, \nonumber \\
        M_{ij} &= 0 ~ (i < j). \nonumber
    \end{align}
    Such an $M$ is always invertible.
    If we rewrite the claim in terms of elements,
    \begin{align}
        A_{ij} + \Lambda_{ij}= \sum_k M_{ik} M_{jk}, \label{eq:tri_ang}
    \end{align}
    for all pairs $(i,j)$.
    In the case $i=j$, it always holds by taking $\Lambda$ appropriately.
    Thus, considering the case $i>j$ is enough since Eq.~(\ref{eq:tri_ang}) is symmetric under $i$ and $j$.
    Hence, we only consider the case $i>j$ below.

    Since $M$ is a lower-triangular matrix, $M_{ik}M_{jk}=0$ unless $k \le j$, following 
    \begin{align}
        A_{ij} = \sum_{k<j} M_{ik} M_{jk} + M_{ij}. \label{eq:core_lemma}
    \end{align}
    Based on those observations, we prove we can determine all elements of $M$ by mathematical induction about $j$.
    
    For $j=1$, since there is no $k$ such that $k<j=1$, Eq.~(\ref{eq:core_lemma}) indicates that $A_{i1} = M_{i1}$ for $\forall ~i~(>1)$.
    As $M_{11} = 1$, we can determine the elements of $M_{i1}$ for $ \forall i$.
    Next, we assume that $M_{ij'}$ have already been determined for $\forall ~i$ and $j'<j$.
    Then, we can obtain $M_{ij}$ from Eq.~(\ref{eq:core_lemma}) because
    \begin{align}
        M_{ij} = A_{ij} - \sum_{k<j} M_{ik} M_{jk} \nonumber
    \end{align}
    and RHS is already determined from the hypothesis for $j'<j$.
\end{proof}

Using the above lemma, let us derive the theorem.
\begin{proof}
    Let \((A|B)\in\mathbb{F}_2^{S\times 2N_q}\) be such a binary representation, with \(A,B\in\mathbb{F}_2^{S\times N_q}\).
    Consider the following steps.
    \begin{enumerate}
        \item Use Hadamard gates to make $A$ have rank $S$.
        \item Use $\CNOT$s to perform Gaussian elimination on $A$, producing
        \begin{align}
            \left(
              \begin{array}{cc|cc}
                I & 0  & C & D
              \end{array}
            \right), \nonumber
        \end{align}
        where $C \in \mathbb{F}_2^{S \times S}$ and $D \in \mathbb{F}_2^{S \times (N_q-S)}$ are the former and latter parts of $B$ after Gaussian elimination on $X$-space.
        \item Use Hadamard gates to swap,
        \begin{align}
            \left(
              \begin{array}{cc|cc}
                I & D  & C & 0
              \end{array}
            \right). \nonumber
        \end{align}
        \item Again, use $\CNOT$s to perform Gaussian elimination on $X$-space, producing
        \begin{align}
            \left(
              \begin{array}{cc|cc}
                I & 0  & C & 0
              \end{array}
            \right). \nonumber
        \end{align}
        \item Commutativity states that $IC^T$ is symmetric, implying that $C$ is symmetric.
        We can apply phase gates to add a diagonal matrix to $C$ to convert $C$ into the form $C = MM^T$ for some invertible $M$ based on Lemma~\ref{lem:sym}.
        \item Use $\CNOT$s, producing
        \begin{align}
            \left(
              \begin{array}{cc|cc}
                M & 0  & M & 0
              \end{array}
            \right). \nonumber
        \end{align}
        \item Apply phase gates to $S$ qubits to obtain
        \begin{align}
            \left(
              \begin{array}{cc|c}
                M & 0  & 0 
              \end{array}
            \right). \nonumber
        \end{align}
        \item Again, use $\CNOT$s to perform Gaussian elimination on $M$, producing
        \begin{align}
            \left(
              \begin{array}{cc|c}
                I & 0  & 0 
              \end{array}
            \right). \nonumber
        \end{align}
        \item Use Hadamard gates for swapping $X$- and $Z$-space
        \begin{align}
            \left(
              \begin{array}{c|cc}
                0 & I  &  0
              \end{array}
            \right). \nonumber
        \end{align}
    \end{enumerate}
\end{proof}
Let us evaluate the circuit complexity in this algorithm.
The parallel application of Hadamard gates and Phase gates implies that steps 1, 3, 5, 7, and 9 can be achieved with constant depth and $\mathcal{O} (N_q)$ gates.
Steps 2, 4, 6, and 8 involving $\CNOT$s can be constructed by $\mathcal{O}(N_q^2)$ gates with $\mathcal{O}(\log N_q)$ depth for each step based on \cite{Moore:1998hz}.
In total, we can transform any stabilizer into canonical form using $\mathcal{O}(N_q^2)$ gates with $\mathcal{O}(\log N_q)$ depth.

\subsection{Efficient construction of the MUPB}
Here, we prove that measurement bases shown in Fig.~\ref{fig:gen_MUPB} satisfy the MUPB condition (Def.~\ref{def:MUPB}) in arbitrary dimensions and any boundary condition and they can be constructed efficiently, i.e. $\mathcal{O}(N_q^2)$ gates and  $\mathcal{O}(\log N_q)$ depth for the system size $N_q$ based on the stabilizer formalism.

For $\mathbb{Z}_2$ LGTs in arbitrary dimensions and with general boundary conditions, the gauge groups spanned by independent Gauss's law operators\footnote{
As follows from Theorem~\ref{th:decomp_stab}, the stabilizer generators corresponding to Gauss's law constraints are assumed to be independent. 
If the original set of Gauss's law operators contains redundant constraints, we first replace it with an independent generating subset.
} $\mathcal{G}_{N_q} = \braket{\{G_n\}_{n=1}^S}$ can be efficiently decomposed into the trivial stabilizer group using Theorem~\ref{th:decomp_stab} followed by the Corollary below.
\begin{corollary} \label{cor:decompse_Z2}
    For any $\mathbb{Z}_2$ gauge groups $\mathcal{G}_{N_q}$ with $S$ independent generators, there exists an element of Clifford group $W \in \Clifford_{N_q}$ such that
    \begin{align}
        W \mathcal{G}_{N_q} W^{\dagger} 
        &= \{I, Z\}^{\otimes S} \otimes \{I\}^{\otimes N_q - S} \nonumber \\
        &= \braket{\{Z_1, Z_2, ..., Z_S\}}. \label{eq:decomp}
    \end{align}
    $W$ can be efficiently computed classically and constructed using $\mathcal{O}(N_q^2)$ Hadamard gates, Phase gates, and CNOT gates with $\mathcal{O}(\log N_q)$ depth.
\end{corollary}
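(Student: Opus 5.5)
The corollary follows from Theorem~\ref{th:decomp_stab} once we check its hypotheses and then repair the one thing the binary picture loses: the signs of the Pauli operators.

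\textbf{Step 1: hypotheses.} The Gauss's law operators $G_n$ are Hermitian Pauli products and commute pairwise. After passing to an independent generating subset as in the footnote, $\mathcal{G}_{N_q}$ has $S$ generators. It does not contain $-I$, because the physical subspace is nonempty, so some state $\ket{\psi}$ satisfies $G\ket{\psi}=+\ket{\psi}$ for every $G\in\mathcal{G}_{N_q}$. Hence $\mathcal{G}_{N_q}$ is a stabilizer group. Its binary representation $(A|B)\in\mathbb{F}_2^{S\times 2N_q}$ has rank $S$, since a nontrivial $\mathbb{F}_2$-relation among the rows would mean some product of generators equals $\pm I$, contradicting independence.

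\textbf{Step 2: Clifford reduction.} Apply Theorem~\ref{th:decomp_stab} to $(A|B)$. This yields a Clifford $W_0$, a product of H, P and CNOT layers, with $W_0 G_n W_0^\dagger = \pm \prod_j Z_j^{c_{nj}}$. Here $c_{nj}$ are the rows of the final $Z$-block; more precisely the rows are $(0\,0\,|\,I\,0)$ up to the row operations implicitly used in the Gaussian eliminations. Conjugation is a group automorphism, and row operations only change the choice of generators. So, up to signs, the image group is generated by $Z_1,\dots,Z_S$.

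\textbf{Step 3: signs.} This is the main point the binary form hides. Each image generator has the form $(-1)^{s_n} Z_n$ for the chosen generating set. I would append $\prod_{n:s_n=1} X_n$ to $W_0$: it anticommutes with exactly the $Z_n$ carrying a minus sign and commutes with the others, so it removes every sign. The resulting $W$ is still Clifford and still has constant extra depth. Signs of $\pm i$ cannot occur, since all elements are Hermitian squares-to-$I$ Paulis, and $-I$ is excluded by Step 1. Then $W\mathcal{G}_{N_q}W^\dagger = \braket{\{Z_1,\dots,Z_S\}} = \{I,Z\}^{\otimes S}\otimes\{I\}^{\otimes N_q-S}$.

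If one insists on $W G_n W^\dagger = Z_n$ for the original generators rather than only equality of groups, a final CNOT network can be added. It realizes the inverse of the invertible $S\times S$ matrix relating the two generating sets: CNOT$_{i,j}$ maps $Z_j\mapsto Z_iZ_j$, so CNOTs implement arbitrary invertible linear maps on the $Z$-labels of the first $S$ qubits. This costs another Gaussian-elimination layer.

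\textbf{Step 4: complexity.} Every step is classical Gaussian elimination over $\mathbb{F}_2$ on an $S\times 2N_q$ matrix, plus the triangular recursion of Lemma~\ref{lem:sym}. All of this is polynomial time, and the signs are tracked with the standard tableau update rules. The gate count and depth come from the count after Theorem~\ref{th:decomp_stab}. Steps 1, 3, 5, 7 and 9 are single layers of H or P gates, with $\mathcal{O}(N_q)$ gates and constant depth. The CNOT stages each implement a linear reversible map, which costs $\mathcal{O}(N_q^2)$ gates at depth $\mathcal{O}(\log N_q)$ by the cited result. The added $X$ layer and optional relabeling stage fit the same bounds.

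The only delicate points are the sign bookkeeping in Step 3 and confirming the independence and rank condition of Step 1. Everything else is a direct invocation of Theorem~\ref{th:decomp_stab}.
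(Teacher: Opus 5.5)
Your proposal follows the paper's route: apply Theorem~\ref{th:decomp_stab} to the independent Gauss-law generators, then read off the gate count stated after its proof. Your additions are correct refinements. Since the binary tableau only yields $W_0G_nW_0^\dagger=\pm Z_n$, your layer of $X$ gates is exactly what the stated group equality needs, and the paper passes over this point. Your CNOT relabeling stage is harmless but unnecessary: the reduction in Theorem~\ref{th:decomp_stab} uses only Clifford (column) operations and never recombines rows, so it already sends each $G_n$ to $\pm Z_n$.

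One caveat you inherit from the paper concerns depth. The cited $\mathcal{O}(\log N_q)$-depth parallelization of CNOT networks uses $\mathcal{O}(N_q^2)$ ancilla qubits. Without ancillas, a counting argument forces depth $\Omega(N_q/\log N_q)$ for worst-case CNOT networks, so the depth part of the bound should be read as ancilla-assisted.
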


Based on this corollary, we can derive the efficient quantum circuits to
construct the MUPB.
\begin{theorem}
    Fig.~\ref{fig:gen_MUPB} satisfy the condition of the MUPB.
    $W$ is the element of the Clifford group shown in Corollary~\ref{cor:decompse_Z2}.
    $W$ can be constructed using $\mathcal{O}(N_q^2)$ Hadamard gates, Phase gates and CNOT gates with $\mathcal{O}(\log N_q)$ depth.
\end{theorem}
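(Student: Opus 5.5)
The plan is to interpret the two circuits of Fig.~\ref{fig:gen_MUPB} as defining two bases through their preimages. Measuring after $W$ in the computational basis projects onto the states $W^\dagger\ket{z}$ with $z\in\{0,1\}^{N_q}$. Measuring after $W$ followed by Hadamards on qubits $S+1,\dots,N_q$ projects onto $W^\dagger(I^{\otimes S}\otimes \mathsf{H}^{\otimes (N_q-S)})\ket{z}$. Write $z=(a,b)$ with $a\in\{0,1\}^S$ and $b\in\{0,1\}^{N_q-S}$. The first basis is then $\ket{a,b^{(1)}}:=W^\dagger\ket{a}\ket{b}$, and the second is $\ket{a,b^{(2)}}:=W^\dagger\ket{a}\mathsf{H}^{\otimes(N_q-S)}\ket{b}$. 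The strategy is to conjugate everything by $W$ using Corollary~\ref{cor:decompse_Z2}. Gauss's law operators then become $Z_1,\dots,Z_S$, and the problem reduces to a trivial statement about product states.

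\emph{Condition 1.} By Corollary~\ref{cor:decompse_Z2}, $G_n=W^\dagger Z_n W$ for $n\le S$. Hence
\begin{align}
G_n\ket{a,b^{(c)}}=W^\dagger Z_n\ket{a}\ket{\cdot}=(-1)^{a_n}\ket{a,b^{(c)}},\qquad c=1,2,\nonumber
\end{align}
because $Z_n$ acts only on the first $S$ qubits, where both bases carry the computational state $\ket{a}$. If some Gauss operators in the original list were redundant, their eigenvalues are products of these, so they are still $\pm1$. Every basis state is therefore a simultaneous eigenstate, with $g_n=(-1)^{a_n}$. The physical states in both bases are exactly those with $a=0$. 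By Theorem~\ref{th:code_dim}, each basis contributes $2^{N_q-S}=d_{\phys}$ physical states, so each basis restricted to $a=0$ is an orthonormal basis of $\Hilbert_{\phys}$.

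\emph{Condition 2.} For physical states, unitarity of $W$ gives
\begin{align}
\braket{0,b^{(1)}|0,b'^{(2)}}=\braket{0|0}\,\bra{b}\mathsf{H}^{\otimes(N_q-S)}\ket{b'}=2^{-(N_q-S)/2}(-1)^{b\cdot b'}.\nonumber
\end{align}
Squaring gives $1/2^{N_q-S}=1/d_{\phys}$, as required. Alternatively, one can phrase this through Theorem~\ref{th:inner_prod}. The stabilizer groups of the two states share the generators $Z_1,\dots,Z_S$ and differ in the $N_q-S$ generators $Z_j$ versus $\pm X_j$, so the overlap is $2^{-(N_q-S)/2}$. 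The direct computation is simpler, and I would present that.

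\emph{Complexity.} The cost of $W$ comes directly from Corollary~\ref{cor:decompse_Z2}: $\mathcal{O}(N_q^2)$ gates and $\mathcal{O}(\log N_q)$ depth. The extra Hadamard layer adds $N_q-S$ gates at depth one, so the bounds are unchanged.

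The main obstacle is not the MUPB verification, which is immediate once the problem is conjugated to canonical form. It is the justification of Corollary~\ref{cor:decompse_Z2} at the level of the group with signs, rather than only the binary representation. Theorem~\ref{th:decomp_stab} tracks Pauli operators up to phase, so after $W$ one only gets $WG_nW^\dagger=\pm Z_n$ (the phases are real since the $G_n$ are Hermitian). I would handle this by appending Pauli $X_n$ corrections on the qubits carrying a $-$ sign; these are Clifford and cost $\mathcal{O}(N_q)$ gates at depth one. Even without this correction the sign merely relabels which $a$ is physical, and the argument above goes through with $a=0$ replaced by the appropriate string.

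The final step is to row-reduce the generators by taking products, so that they map exactly to single $Z_n$ and not to products of $Z$'s; this is a change of generating set, which does not affect the group or its common $+1$ eigenspace. I would also note that computing $W$ classically is $\mathcal{O}(N_q^3)$ Gaussian elimination over $\mathbb{F}_2$.
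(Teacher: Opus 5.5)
Your proposal is correct. It follows the same skeleton as the paper's proof, with one different step and one addition.

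\textbf{Same as the paper.} Condition~1 and the complexity count match exactly: you conjugate $G_n$ by $W$, read off the eigenvalue on the computational state, and inherit the cost of $W$ from Corollary~\ref{cor:decompse_Z2}.

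\textbf{Different: how you prove Condition~2.} The paper writes the stabilizer groups of the two basis states as $\langle\{G_n\}\cup\{W^\dagger(-1)^{b_i(n)}Z_nW\}_{n>S}\rangle$ and $\langle\{G_n\}\cup\{W^\dagger(-1)^{b_j(n)}X_nW\}_{n>S}\rangle$. It counts $N_q-S$ differing generators and then invokes Theorem~\ref{th:inner_prod}. You instead undo $W$ and compute $\bra{b}\mathsf{H}^{\otimes(N_q-S)}\ket{b'}$ directly.

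Your route is more elementary and self-contained. It does not rely on the generator-counting theorem, whose proof in the paper is only sketched. It also needs no separate check that the two groups contain no pair $P,-P$, and it gives the exact phase $(-1)^{b\cdot b'}$. The paper's route fits its stabilizer narrative, and it carries over verbatim to the geometry-adapted circuit of App.~\ref{app:proof}. There the two bases are not presented as a Hadamard layer on top of a common canonical frame, so a direct computation is less immediate.

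\textbf{Addition: the sign issue.} Your remark on signs is a genuine improvement over the paper. Theorem~\ref{th:decomp_stab} works in the phase-free binary representation, so it only yields $WG_nW^\dagger=\pm Z_n$. The paper's Corollary asserts the $+$ sign without justification. Appending Pauli $X_n$ gates, or simply relabeling which string $a$ is physical, closes this gap.

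Your extra row-reduction step is harmless but unnecessary for the paper's $W$. The procedure in Theorem~\ref{th:decomp_stab} uses only Clifford conjugations (column operations) and never multiplies generators together. Each $G_n$ is therefore already sent individually to $\pm Z_n$.
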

Here, we denote the measurement bases represented in Figs.~\ref{fig:gen_MUPB} by the physical $Z$-basis and the physical $X$-basis with their respective basis states denoted by $\{\ket{e_i^{(Z_{\text{phys}})}}\}$ and $\{\ket{e_i^{(X_{\text{phys}})}}\}$ again.
\begin{proof}
    First, we show that the two measurement bases are eigenstates of Gauss's law operators.
    The measurement basis in Fig.~\ref{fig:gen_MUPB} right is given by
    $
        \ket{e_i^{(Z_\phys)}} =  W^{\dagger}\ket{b_i}
    $
    in terms of bit string states $\ket{b_i} = \ket{b_i(1) b_i(2) ... b_i({N_q})}$, where $b_i(j) \in \{0, 1\}$ means the $j$-th bit string of the state $\ket{b_i}$, $\{\ket{e_i^{(Z_\phys)}}\}_i$.
    Therefore, 
    \begin{align}
        G_n \ket{e_i^{(Z_\phys)}} 
        &= W^{\dagger} (W G_n W^{\dagger}) \ket{b_i} \nonumber\\
        &= z W^{\dagger} \ket{b_i} \nonumber\\
        &= z \ket{e_i^{(Z_\phys)}}, \nonumber
    \end{align}
    where $z$ is the corresponding eigenvalue.
    Note that $WG_nW^{\dagger} \in \{I, Z\}^{\otimes S} \otimes \{I\}^{N_q-S}$ and $\ket{b_i}$ is one of the eigenvector of $WG_nW^{\dagger}$.
    Similarly, we can show
    \begin{align}
        G_n \ket{e_i^{(X_\phys)}} 
        &= z \ket{e_i^{(X_\phys)}}. \nonumber
    \end{align}

    Finally, we show the mutually unbiasedness of two measurement bases.
    By Theorem~\ref{th:code_dim} , $d_{\phys} = 2^{N_q-S}$.
    A measurement basis state $\ket{e_i^{(Z_\phys)}}$ is uniquely stabilized by
    \begin{align}
        \braket{\{G_n\}_{n=1}^S \cup \{W^{\dagger} (-1)^{b_i(n)}Z_n W\}_{n=S+1}^{N_q}}. \nonumber   
    \end{align}
    On the other hand, a measurement basis $\ket{e_i^{(X_\phys)}}$ is uniquely stabilized by
    \begin{align}
        \braket{\{G_n\}_{n=1}^S \cup \{W^{\dagger}(-1)^{b_i(n)} X_n W\}_{n=S+1}^{N_q}}.    \nonumber
    \end{align}
    Therefore, the number of different stabilizers are $N_q-S$.
    Thus,
    \begin{align}
        |\braket{e_i^{(X_\phys)}|e_j^{(Z_\phys)}}| = \frac{1}{2^{(N_q-S)/2}} = \frac{1}{\sqrt{d_{\phys}}}, \nonumber
    \end{align}
    implying that two measurement bases satisfy the condition of the MUPB.
\end{proof}

\section{Autocorrelation analysis} \label{app:autocorr}
In this appendix, we first derive Eq.~(\ref{eq:opt_shot}).
Next, we show Eq.~(\ref{eq:single_shot_ineq}).
Finally, we compare the single-shot strategy with direct independent measurements on the Gibbs state.

First, for $t\ge 1$, the shot noise is independent of the Markov chain, and therefore
\begin{align}
    \Cov(O_k,O_{k+t})=\Cov(\mu_k,\mu_{k+t}). \label{eq:cov}
\end{align}
Using the definition of the integrated autocorrelation time and Eq.~(\ref{eq:cov}), we obtain
\begin{align}
    \tau(N_\shot) 
    &= 1 + 2 \sum_t \rho_O(t) \nonumber \\
    &= 1 + 2 \sum_t \frac{\Cov(O_k, O_{k+t})}{\sigma_\mu^2 + (1/N_\shot) \sigma_\shot^2} \nonumber \\
    &= 1 + \frac{\sigma_\mu^2}{\sigma_\mu^2 + (1/N_\shot) \sigma_\shot^2} \left(2\sum_t \rho_\mu(t)\right). \nonumber
\end{align}
Since $\tau_\mu = 1 + 2\sum_{t=1}^\infty \rho_\mu(t)$, we find
\begin{align}
    \tau(N_\shot) = 1 + \frac{N_\shot}{N_\shot + r} (\tau_\mu - 1), \label{eq:auto_Nshot}
\end{align}
where we define $r:= \sigma_\shot^2/\sigma_\mu^2$.
The variance is then given by
\begin{align}
    &\Var[\bar O](N_\shot) \nonumber \\
    &= \frac{\tau(N_\shot)}{N_\chain} \sigma_\mu^2 + \frac{\tau(N_\shot)}{N_\shot N_\chain} \sigma_\shot^2 \nonumber\\
    &= \frac{\tau(N_\shot)(N_\shot M + 1)}{N_\tot} \left( \sigma_\mu^2 + \frac{\sigma_\shot^2}{N_\shot}  \right) \nonumber\\
    &= \frac{\sigma_\mu^2}{N_\tot} \frac{(N_\shot M+1)(N_\shot \tau_\mu + r)}{N_\shot}. \nonumber
\end{align}
Treating $N_\shot$ as a continuous variable, the stationary point of this function is given by
\begin{align}
    N_\shot^* = \sqrt{\frac{r}{\tau_\mu M}} = \sqrt{\frac{\sigma_\shot^2}{\sigma_\mu^2 \tau_\mu M}}. \nonumber
\end{align}
Because we require $N_\shot\ge 1$, the optimal shot number is $N_\shot^* \sim \max(1, \sqrt{{\sigma_\shot^2}/{\sigma_\mu^2 \tau_\mu M}})$.

Next, we prove Eq.~(\ref{eq:single_shot_ineq}).
When $r/\tau_\mu \le M$, the constrained optimum is simply $N_\shot^*=1$, and hence
\begin{align}
    \frac{\Var[\bar O](1)}{\Var[\bar O](N_\shot^*)}=1. \label{eq:ratio_ineq_2}
\end{align}
When $r/\tau_\mu > M$, the continuous optimum satisfies $N_\shot> 1$, and thus
\begin{align}
    \Var[\bar O](N_\shot^*) 
    &= \frac{\sigma_\mu^2}{N_\tot} \frac{(M\sqrt{\frac{r}{\tau_\mu M}} + 1)(\tau_\mu \sqrt{\frac{r}{\tau_\mu M}} + r)}{\sqrt{\frac{r}{\tau_\mu M}}} \nonumber\\
    &= \frac{\sigma_\mu^2}{N_\tot} (\sqrt{r M} + \sqrt{\tau_\mu})^2. \nonumber
\end{align}
On the other hand, 
\begin{align}
    \Var[\bar O](N_\shot=1) = \frac{\sigma_\mu^2}{N_\tot} (M+1) (\tau_\mu + r). \nonumber
\end{align}
Therefore, the ratio becomes
\begin{align}
    R:= \frac{\Var[\bar O](N_\shot=1)}{\Var[\bar O](N_\shot^*)}=\frac{(M+1)(\tau_\mu + r)}{(\sqrt{\tau_\mu} + \sqrt{r M})^2}. \nonumber
\end{align}
If we set $x := \sqrt{r / \tau_\mu}$, then $R = [(M+1)(1+x^2)]/(1+\sqrt{M} x)^2 ~(\sqrt{M} < x <\infty)$.
Due to $dR(x)/dx > 0$, $R(x)$ is monotonically increasing function.
Considering
\begin{align}
    \lim_{x \rightarrow \sqrt{M}} R(x) = 1,
\end{align}
and 
\begin{align}
    \lim_{x\rightarrow \infty} R(x) = \frac{M+1}{M} = 1 + \frac{1}{M}, \nonumber
\end{align}
we conclude
\begin{align}
    1 < R(x) < 1 + \frac{1}{M}. \label{eq:ratio_ineq_1}
\end{align}
Combining the two cases (Eq.~(\ref{eq:ratio_ineq_2}) and Eq.~(\ref{eq:ratio_ineq_1})), we conclude that
\begin{align}
    1 \le \frac{\Var[\bar O](1)}{\Var[\bar O](N_\shot^*)} < 1+ \frac{1}{M} \le  2. \nonumber
\end{align}
This proves Eq.~(\ref{eq:single_shot_ineq}).

Finally, we compare the single-shot strategy with direct independent
measurements on the Gibbs state. As in the QMETTS setting considered above,
we assume that the observable is decomposed as
$
    \Obs = \sum_{m=1}^M \Obs_m ,
$
and that each component $\Obs_m$ can be measured with one quantum circuit on
the Gibbs state
$
    \rho_{\beta, \mu} = {e^{-\beta (H - \mu N)}}/{Z}.
$
For a fixed total number of circuit executions $N_\tot$, the number of shots
allocated to each component is
$
    N_\shot^\Gibbs = {N_\tot}/{M}.
$
The corresponding estimator is
\begin{align}
    \bar O_\Gibbs
    =
    \frac{1}{N_\shot^\Gibbs}
    \sum_{l=1}^{N_\shot^\Gibbs}
    \sum_{m=1}^M X_{m,l}, \nonumber 
\end{align}
where $X_{m,l}$ denotes the $l$-th measurement outcome of $\Obs_m$ on $\rho_{\beta, \mu}$. 
Assuming that the measurements for different $m$ and $l$ are
independent, its variance is
\begin{align}
    \Var_\Gibbs[\bar O_\Gibbs](N_\shot^\Gibbs)
    &=
    \frac{1}{N_\shot^\Gibbs}
    \sum_{m=1}^M
    \Var_\Gibbs[X_{m,l}]  \nonumber \\
    &=
    \frac{1}{N_\shot^\Gibbs}
    \sigma_\Gibbs^2 , \nonumber
\end{align}
where
\begin{align}
    \sigma_\Gibbs^2
    :=
    \sum_{m=1}^M
    \left[
    \Tr[\Obs_m^2 \rho_{\beta, \mu}]
    -
    \left(\Tr[\Obs_m \rho_{\beta, \mu}]\right)^2
    \right]. \nonumber 
\end{align}

We next relate this quantity to the variance components appearing in the
QMETTS estimator. 
Let $\mu_{m,i} := \bra{\phi_i}\Obs_m\ket{\phi_i}$, implying that $\mu_i := \sum_{m=1}^M \mu_{m,i}$.
Using the METTS ensemble identity $\sum_i \Prob_i \bra{\phi_i} A \ket{\phi_i}=\Tr[A\rho_{\beta, \mu}]$, for any observable $A$, we obtain
\begin{align}
    \sigma_\shot^2 + \sigma_\mu^2
    &=
    \sum_i \Prob_i
    \sum_{m=1}^M
    \left[
    \bra{\phi_i}\Obs_m^2\ket{\phi_i}
    -
    \mu_{m,i}^2
    \right] \nonumber \\
    &\quad+
    \Var_\chain[\mu_i] \nonumber \\
    &=
    \sum_{m=1}^M
    \left[
    \Tr[\Obs_m^2\rho_{\beta, \mu}]
    -
    \left(\Tr[\Obs_m\rho_{\beta, \mu}]\right)^2
    \right] \nonumber \\
    &\quad +
    2\sum_{m<m'}
    \Cov_\chain(\mu_{m,i},\mu_{m',i}) \nonumber \\
    &=
    \sigma_\Gibbs^2
    +
    2\sum_{m<m'}
    \Cov_\chain(\mu_{m,i},\mu_{m',i}) .\nonumber
\end{align}
The covariance term 
\begin{align}
    &\Cov_\chain(\mu_{m,i},\mu_{m',i}) \nonumber \\
    &= \sum_{i} \Prob_i \mu_{m,i} \mu_{m',i} \nonumber \\
    &\quad - \left( \sum_{i} \Prob_i \mu_{m,i} \right) \cdot \left(\sum_{i} \Prob_i \mu_{m',i} \right), \nonumber
\end{align}
represents the covariance between the components $\Obs_m$ and can be either positive or negative in general.

Substituting this relation into the single-shot QMETTS variance gives
\begin{align}
    \Var[\bar O](1) 
    &= \tau(1) \left(1+\frac{1}{M}\right) \Bigl[ \Var_\Gibbs[\bar O]\left(\frac{N_\tot}{M}\right) \nonumber \\
    &\quad +  \frac{2\sum_{m< m'} \Cov(\mu_{m, i}, \mu_{m'. i})}{N_\tot / M}\Bigr].\nonumber 
\end{align}
Therefore, apart from the covariance correction among different components
of the observable in the METTS ensemble, the single-shot QMETTS variance is
related to the direct Gibbs measurement variance by the multiplicative factor
$\tau(1)(1+1/M)$.

\section{Imaginary time evolution part} \label{app:QITE}
In this appendix, we briefly provide the details of the implementation.
In our calculation, we employ the quantum imaginary time evolution (QITE) method proposed by Motta \textit{et al.}~\cite{Motta:2019yya}.

To implement the ITE via Trotterization, we decompose the Hamiltonian with the chemical potential term into three parts,
$H -\mu N = H_e + H_o + H_D$,
where
\begin{align}
H_{e/o} &= -\frac{1}{4a} \sum_{n \in \text{even/odd}}^{L_{\text{KS}}-1}
\Bigl[
X_n \sigma^Z_{n,n+1} X_{n+1} \nonumber \\
&~~+ Y_n \sigma^Z_{n,n+1} Y_{n+1}
\Bigr],\nonumber \\
H_D &= a g^2 \sum_{n=1}^{L_{\text{KS}}-1} \left( 1 - \sigma^X_{n,n+1} \right)
+ \frac{m}{2} \sum_{n=1}^{L_{\text{KS}}} (-1)^n Z_n \nonumber \\
&~~- \frac{\mu}{2} \sum_{n=1}^{L_{\text{KS}}} Z_n. \nonumber
\end{align}
As discussed below, this decomposition suppresses Trotter errors.
We employ a second-order Trotter decomposition,
\begin{align}
e^{-\beta (H-\mu N)/2}
&= \left( e^{-\Delta \beta (H-\mu N)/2} \right)^{N_s} \nonumber \\
&\approx
\left(
\prod_{\gamma \in \{e,o,D\}}^{\leftarrow}
\prod_{\gamma \in \{e,o,D\}}^{\rightarrow}
e^{-\Delta \beta H_\gamma/4}
\right)^{N_s}, \nonumber
\end{align}
where $\Delta \beta = \beta / N_s$ and $N_s$ is the number of Trotter steps.
The left- and right-ordered products are defined as
\begin{align}
\prod_{\gamma \in \{e,o,D\}}^{\leftarrow}
e^{-\Delta \beta H_\gamma/4}
&:= e^{-\Delta \beta H_e/4}
   e^{-\Delta \beta H_o/4}
   e^{-\Delta \beta H_D/4}, \nonumber \\
\prod_{\gamma \in \{e,o,D\}}^{\rightarrow}
e^{-\Delta \beta H_\gamma/4}
&:= e^{-\Delta \beta H_D/4}
   e^{-\Delta \beta H_o/4}
   e^{-\Delta \beta H_e/4}. \nonumber
\end{align}

Following Ref.~\cite{childs2021theory},
the Trotter error can be bounded in the spectral norm as
\begin{align}
&\Bigg\|
\prod_{\gamma}^{\leftarrow}
\prod_{\gamma}^{\rightarrow}
e^{-\Delta \beta H_\gamma/4}
- e^{-\Delta \beta H}
\Bigg\|_{\infty} \nonumber \\
&=
\mathcal{O}
\left(
\tilde{\alpha}_{\mathrm{comm}}
(\Delta \beta)^3
e^{4 \Delta \beta \sum_\gamma \|H_\gamma\|_\infty}
\right), \nonumber
\end{align}
where
$\tilde{\alpha}_{\mathrm{comm}}
= \sum_{\gamma_1,\gamma_2,\gamma_3}
\|[H_{\gamma_1},[H_{\gamma_2},H_{\gamma_3}]]\|_\infty$. 

Each term $H_\gamma$ is further decomposed as
\begin{align}
e^{-\Delta \beta H_\gamma/4}
=
\prod_m e^{-\Delta \beta h_\gamma[m]/4},
\nonumber
\end{align}
where $H_\gamma=\sum_m h_\gamma[m]$ and each $h_\gamma[m]$ consists of a single Pauli string with a coefficient.
Since all $h_\gamma[m]$ within the same $H_\gamma$ commute,
this step introduces no additional Trotter error.

For notational brevity, we denote each elementary operator
$e^{-\Delta \beta h_\gamma[m]/4}$ by $e^{-\Delta \tau h}$ in the below discussion.
To implement the non-unitary operators
$e^{-\Delta \tau h}$, the non-unitary evolution is approximated by a unitary operator
$\exp(-i\Delta \tau \sum_I a_I \sigma_I)$,
where $\sigma_I$ are Pauli strings acting on a domain of size $D$.
Those coefficients are determined by minimizing the norm $\|(\e^{{-i\Delta \tau \sum_I a_I \sigma_I}} - e^{-\Delta \tau h}) \ket{\Psi}\|$ on classical computers, where $\ket{\Psi}$ is the state before imaginary-time evolution.
To achieve good accuracy and make the performance of the QMETTS algorithm clear, in the present work we take $\Delta \beta =0.01$ and the full domain size $D=N_q$,
which entails exponential classical cost.
Thus, in practice, scalable implementations rely on truncated domains or alternative approaches mentioned in \cite{Motta:2019yya, gomes2020efficient, Nishi:2020ceu, Sekiyama:2026rgt}.

\section{MUPB proof for the concrete setup} \label{app:proof}
We now provide a rigorous proof of our main statement for the concrete setup considered in the numerical simulation. 
The improved measurement bases shown in Fig.~\ref{fig:MUPB} were obtained heuristically. 
In what follows, we show, using the stabilizer formalism again, that these bases satisfy the defining properties of MUPB. 
We begin by showing that the physical $Z$-basis and $X$-basis represented in Fig.~\ref{fig:MUPB} are eigenstates of the Gauss's law operators (Item 1 of Def.~\ref{def:MUPB}).
The physical $Z$-basis is given by 
\begin{align}
    \ket{e_i^{(Z_\phys)}} = (\prod_{m=1}^{L_{\text{KS}} - 1} \sigma_{m,m+1}^\mathsf{H})^{\dagger}\ket{b_i} = : U_\mathsf{H}^\dagger \ket{b_i}. \nonumber
\end{align}
Since 
$
    G_n \ket{e_i^{(Z_\phys)}}
    = G_n U_\mathsf{H}^\dagger \ket{b_i}
    = U_\mathsf{H}^{\dagger}\cdot ( U_\mathsf{H} G_n U_\mathsf{H}^{\dagger} ) \ket{b_i}
$
and $\sigma^\mathsf{H}_{m-1, m} \sigma^X_{m-1, m} \sigma^\mathsf{H}_{m-1, m} = \sigma_{m-1, m}^Z$, we obtain
\begin{align}
    U_\mathsf{H} G_nU_\mathsf{H}^{\dagger} 
    = (-1)^n \sigma_{n-1, n}^Z Z_n \sigma_{n,n+1}^Z. \nonumber
\end{align}

As $(-1)^n \sigma_{n-1, n}^Z Z_n \sigma_{n, n+1}^Z\ket{b_i} = z \ket{b_i}$, where $z\in \{\pm 1\}$ is the corresponding eigenvalue,  we can show
\begin{align}
    &U_\mathsf{H}^{\dagger} \cdot \left( U_\mathsf{H} G_n U_\mathsf{H}^{\dagger} \right) \ket{b_i} 
    = z U_\mathsf{H}^{\dagger} \ket{b_i}
    = z \ket{e_i^{(Z_\phys)}}. \nonumber
\end{align}
As for $\{\ket{e_i^{(X_\phys)}}\}$ represented in Fig.~\ref{fig:MUPB}, since it can be written as
$
    \ket{e_i^{(X_\phys)}}
    = U^{\dagger}\ket{e_i^{(Z_\phys)}}
$ where $U$ is defined in Eq.~(\ref{eq:physX_U}),
\begin{align}
    G_n \ket{e_i^{(X_\phys)}} 
    &= G_n U^{\dagger} \ket{e_i^{(Z_\phys)}} \nonumber \\
    &= U^{\dagger} (U G_n U^{\dagger}) \ket{e_i^{(Z_\phys)}}. \nonumber
\end{align}
Then,
\begin{align}
    &G_n = (-1)^n \sigma_{n-1, n}^X Z_n \sigma_{n,n+1}^X \nonumber \\
    &\xrightarrow{V}
    (-1)^n  X_n \nonumber \\
    &\xrightarrow{\prod_m \sigma^\mathsf{H}_{m,m+1}} 
    (-1)^n  X_n \nonumber \\
    &\xrightarrow{ V^\dagger}
    (-1)^n \sigma_{n-1, n}^X Z_n \sigma_{n,n+1}^X = G_n, \nonumber
\end{align}
where $A \xrightarrow{X} B$ denote the conjugation of $A$ under $X$, $B=XAX^\dagger$ and $V$ is defined in Eq.~(\ref{eq:physX_V}).
Thus, the Gauss's law operators are invariant under the conjugation $UG_nU^\dagger = G_n$ and we can also prove that $\{ \ket{e_i^{(X_\phys)}} \}$ are eigenstates of $G_n$.

Next, we prove the mutual unbiasedness in physical space: $|\braket{e_i^{(Z_\phys)}|e_j^{(X_\phys)}}| = {1}/{\sqrt{d_{\phys}}}$~(Item 2 in Def.~\ref{def:MUPB}).
The stabilizer group that stabilizes $\ket{e_i^{(Z_\phys)}}$ is given by,
\begin{align}
    &\langle \{G_n\}_{n=1}^{L_{\text{KS}} - 1} \nonumber 
    ~\cup~\{(-1)^{b_i({g_{n, n+1}})} \sigma^X_{n, n+1}\}_{n=1}^{L_{\text{KS}} - 1} \nonumber \\
    &~\cup~ \{(-1)^{b_i({f_L})}Z_{L_{\text{KS}}}\} \rangle .\nonumber
\end{align}
From Theorem~\ref{th:code_dim}  we can see that the dimension of this stabilized space is $2^0 = 1$ and the state is uniquely determined up to a global phase.
Then, the state $\ket{e^{(X_\phys)}_i}=U^{\dagger}\ket{e_i^{(Z_\phys)}}$ can be described by following stabilizer operators through the conjugation under $U$,
\begin{align}
    &\{(-1)^{b_i({g_{n, n+1}})} \sigma^X_{n, n+1}\}_n \nonumber \\
    &\xrightarrow{U} \{{(-1)^{b_i({g_{n, n+1}})} X_n\sigma^Z_{n, n+1} X_{n+1}\}_n }, \nonumber \\
    &\{Z_{L_{\text{KS}}}\} \xrightarrow{U} \{X_{L_{\text{KS}}}\}, \nonumber
\end{align}
and Gauss's law operators are invariant, $UG_n U^\dagger = G_n$.
Since different operators are $\{(-1)^{b_i({g_{n, n+1}})} \sigma^X_{n, n+1}\}_{n=1}^{L_{\text{KS}} - 1} ~\cup~ \{Z_{L_{\text{KS}}}\}$ and $\{(-1)^{b_i({g_{n, n+1}})} X_n \sigma^Z_{n, n+1} X_{n+1}\}_{n=1}^{L_{\text{KS}} - 1} ~\cup~ \{X_{L_{\text{KS}}}\}$, the number of different generators are $L_{\text{KS}}$.
Then, by Theorem~\ref{th:inner_prod}, we can compute the inner product as
\begin{align}
    |\braket{e_i^{(Z_\phys)}|e_j^{(X_\phys)}}| = \frac{1}{2^{L_{\text{KS}}/2}} = \frac{1}{\sqrt{d_{\phys}}}. \nonumber
\end{align}
Therefore the physical $Z$-basis and the physical $X$-basis satisfy the condition of the MUPB.

\section{Dependence of the probability distributions on the initial state in the numerical benchmark} \label{app:init_depend}
In this appendix, we show the sampling distributions obtained from several different initial states for the numerical benchmark discussed in Sec.~\ref{sec:numerical}.
As seen in Fig.~\ref{fig:init_depend}, the sampled collapse-state distributions are all consistent with the exact stationary distribution within finite-sampling fluctuations, indicating that no clear initial-state dependence is observed in the present setup.
\begin{figure*}
    \centering
    \includegraphics[width=0.9\linewidth]{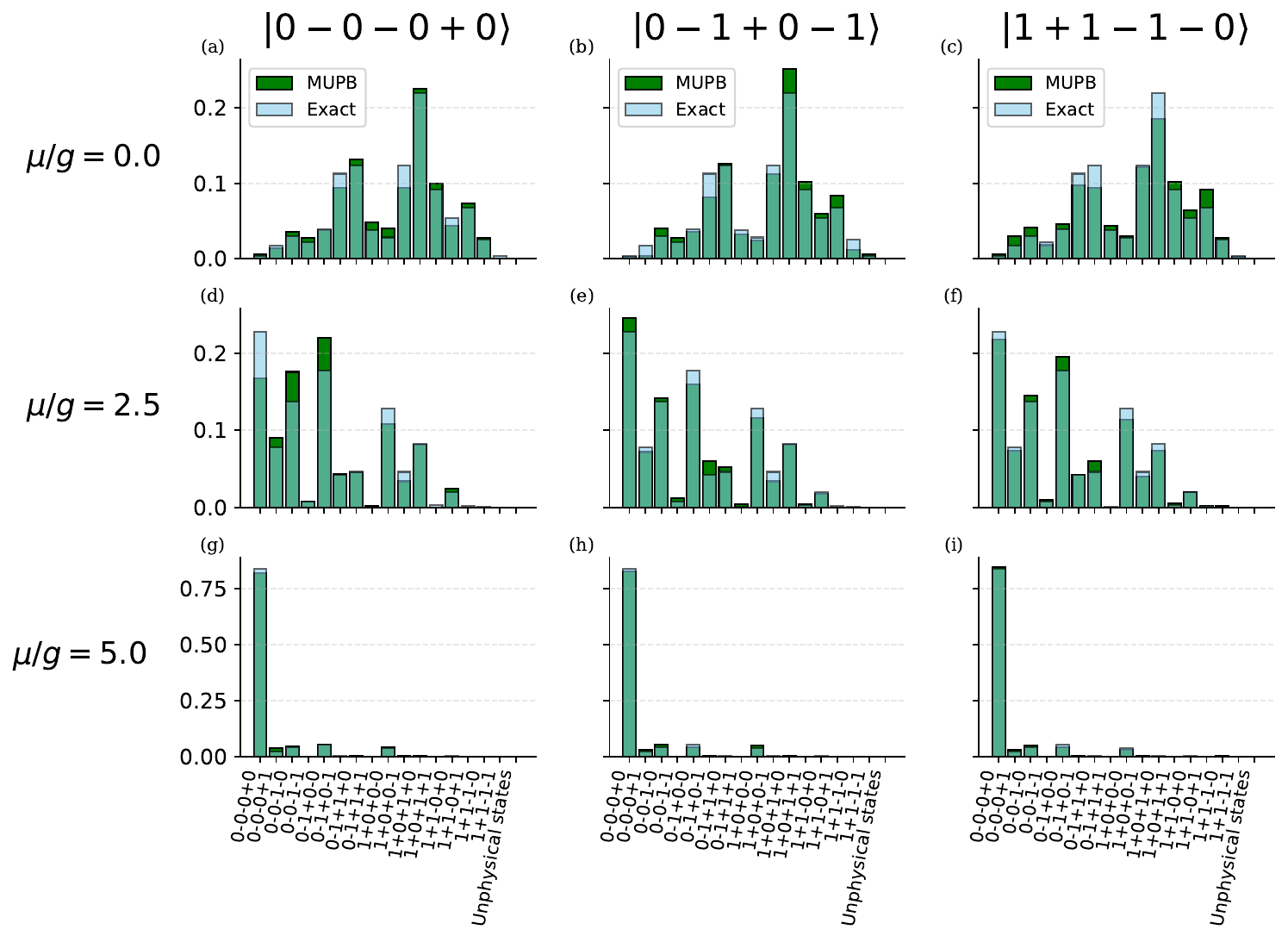}
    \caption{Dependence of probability distributions on the initial state.
    Sampling distributions of collapse states obtained from projective measurements during QMETTS sampling (green) and the exact probabilities (light blue) for $L_{\text{KS}}=4$. 
    The panels are arranged by chemical potential $\mu/g$ (rows) and by the initial state: $\ket{0-0-0+0}$ (left), $\ket{0-1+0-1}$ (middle), and $\ket{1+1-1-0}$ (right).
    Inverse temperature $\beta g=1.0$. 
    Top panels (a–c) correspond to $\mu/g = 0.0$, middle panels (d-f) to $\mu/g = 2.5$, and bottom panels (g-i) to $\mu/g = 5.0$.
    The exact probabilities are calculated as $\Prob_i = \langle i^{(Z_\phys)}| e^{-\beta(H-\mu N)} |i^{(Z_\phys)} \rangle / Z$. 
    All results are obtained with $N_\chain = 1000$.
    }
    \label{fig:init_depend}
\end{figure*}

\bibliographystyle{utphys}
\bibliography{aps}

\end{document}